\makeatletter \@addtoreset{equation}{section} \makeatother
\theoremstyle{plain}
\numberwithin{equation}{section}
\newtheorem{theorem}{Theorem}[section]
\newtheorem{lemma}{Lemma}[section]
\newtheorem{remark}{Remark}[section]	
\newtheorem{corollary}{Corollary}[section]
\newtheorem{example}{Example}[section]
\begin{document}
    \begin{center}
	\textbf{\LARGE{Constructions of entanglement-assisted quantum MDS codes from generalized Reed-Solomon codes}}\footnote {  xiujingzheng99@163.com(X. Zheng);
	liqiwangg@163.com(L. Wang);    
	zhushixin@hfut.edu.cn(S. Zhu)
	}\\
    \end{center}

    \begin{center}
	{Xiujing Zheng \ Liqi Wang \ Shixin Zhu}
    \end{center}

    \begin{center}
	\textit{School of Mathematics, Hefei University of Technology, Hefei, 230009, P. R. China}
    \end{center}

	\begin{abstract}
		
		By generalizing the stabilizer quantum error-correcting codes, entanglement-assisted quantum error-correcting (EAQEC) codes were introduced, 
		which could be derived from any classical linear codes via the relaxation of self-orthogonality conditions with the aid of pre-shared 
		entanglement between the sender and the receiver.
		In this paper, three classes of entanglement-assisted quantum error-correcting maximum-distance-separable (EAQMDS) codes are constructed through  generalized Reed-Solomon codes.
		Under our constructions, the minimum distances of our EAQMDS codes are much larger than those of the known EAQMDS codes of the same lengths that consume the same number of ebits.
		Furthermore, some of the lengths of the EAQMDS codes are not divisors of $q^2-1$, which are completely new and unlike all those known lengths existed before.

	\noindent {\bf Keywords:} EAQEC codes $\cdot$ EAQMDS codes $\cdot$  MDS codes  $\cdot$ Generalized Reed-Solomon codes
		 
	\end{abstract}

	\section{Introduction}
	Over recent decades, the quantum information science has developed very rapidly. Quantum error-correcting (QEC) codes were introduced in order to minimize the decoherence phenomenon over the quantum information channel.
	After Calderbank et al.\cite{CRSS98} gave a connection between classical linear error-correcting codes and QEC codes, the research of QEC codes has made rapidly progress. 
	By utilizing self-orthogonal classical error-correcting codes,  a number of QEC codes with favorable parameters have been derived
	(see \cite{KZL13, LZLX13, KZL14, GLL21, LG09, L04, WZ15} and the relevant references therein). 
	Nevertheless, the self-orthogonality condition forms an obstacle to the construction of QEC codes. 
	Later, a breakthrough had been made by Brun et al. in \cite{BDH06}, where entanglement-assisted (EA) stabilizer formalism was proposed, which utilizes pre-shared entanglement among the sender as well as the receiver to construct QEC codes.
	The associated codes are known as entanglement-assisted quantum error-correcting (EAQEC) codes, that may be constructed from arbitrary classical linear error-correcting codes without the self-orthogonality constraint.
	After that, numerous researchers have been working on constructing EAQEC codes with good parameters via classical linear error-correcting codes (see \cite{SHB11, GHMR19, GJG18, HDB07,GWF23} and the references therein).
	
	Suppose that $q$ is a prime power, an $[[n,k,d;c]]_q$ EAQEC code can correct at most $\lfloor\frac{d-1}{2}\rfloor$ errors by encoding $k$ information qudits into $n$ channel qudits 
	with the assistance of $c$ copies of maximally entangled states.
	In particular, if $c=0$, it is the so-called standard $[[n,k,d]]_q$ QEC code. 
	Analogous to the quantum Singleton bound for the parameters of QEC codes, the more general entanglement-assisted (EA)-quantum Singleton bound for the parameters of EAQEC codes is available to us in the following.

	\begin{theorem}\label{the1.1}
		(\cite{AAHM22,LA18,BDH06,GHW22})(EA-quantum Singleton bound) For any $[[n,k,d;c]]_q$ EAQEC code with $d \le \frac{n+2}{2}$ must satisfy 
		$$n-k+c \ge 2(d-1),$$ 
		where $0\le c \le n-1$. Furthermore, it is indeed the quantum Singleton bound if $c=0$.
	\end{theorem}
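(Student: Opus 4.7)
The plan is to reduce the EA-quantum Singleton bound to the classical Singleton bound by way of the Brun--Devetak--Hsieh correspondence between (stabilizer) EAQEC codes and classical linear codes over $\mathbb{F}_{q^2}$. Concretely, any $[[n,k,d;c]]_q$ EAQEC code can be realized from a classical linear code $C \subseteq \mathbb{F}_{q^2}^n$ with parameters $[n, k_1, d]_{q^2}$, where the number of ebits is $c = \mathrm{rank}(H H^{\dagger})$ for a parity check matrix $H$ of $C$ (equivalently, $c = k_1 - \dim(C \cap C^{\perp_h})$), and the logical dimension satisfies $k = 2k_1 - n + c$. The minimum distance $d$ of the quantum code equals the minimum distance of $C$ (or a related punctured/extended version of it).

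From these relations I would extract the identity
\[
n - k + c \;=\; 2(n - k_1).
\]
Applying the classical Singleton bound $d \le n - k_1 + 1$ to $C$ yields $n - k_1 \ge d - 1$, and substituting into the above identity gives $n - k + c \ge 2(d-1)$, which is precisely the EA-quantum Singleton inequality. The limiting case $c = 0$ forces $C$ to be Hermitian dual-containing and recovers the standard quantum Singleton bound; the range $0 \le c \le n-1$ reflects the fact that $c$ counts ebits shared through the $n$-qubit channel and cannot exceed the symplectic capacity of the encoded register. The hypothesis $d \le \tfrac{n+2}{2}$ simply identifies the nondegenerate regime in which the bound is informative.

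The main obstacle is bookkeeping the parameter dictionary between the classical code and the EAQEC code, because the CSS, Hermitian, and symplectic constructions each produce slightly different formulas relating $(n,k,d,c)$ to $(n,k_1,d)$; one must verify that the minimum distance transfers faithfully and that the ebit count is correctly identified with $k_1 - \dim(C \cap C^{\perp_h})$. A more self-contained alternative, in the spirit of \cite{LA18}, bypasses this dictionary by working directly with quantum information: purify the logical input with a reference $R$, treat Alice's $n$ channel qubits together with Bob's $c$ untouched ebit halves as the decoder's system, impose the Knill--Laflamme-type correctability condition for erasures of weight $d-1$, and extract the bound from subadditivity of von Neumann entropy together with a no-cloning argument. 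I would present the algebraic route here, as it harmonizes naturally with the generalized Reed--Solomon framework used in the remainder of the paper.
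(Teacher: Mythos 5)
The paper does not prove this theorem at all; it is quoted from \cite{LA18} and \cite{BDH06}, so there is no internal proof to compare against. Judged on its own terms, the algebraic route you choose to present has a genuine gap: it only establishes the inequality for EAQEC codes that are \emph{produced} by the classical-to-quantum construction (Theorem \ref{the2.3} of the paper) and whose quantum minimum distance coincides with the minimum distance of the underlying classical code. The theorem, however, is a bound on \emph{every} $[[n,k,d;c]]_q$ EAQEC code. Your reduction needs the converse direction of the correspondence together with faithful transfer of the distance, and neither holds in general: a degenerate EAQEC code can have quantum distance strictly larger than the distance of the classical code realizing it, in which case $d\le n-k_1+1$ is not guaranteed and the substitution into $n-k+c=2(n-k_1)$ fails.

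A second, related symptom that the argument cannot be right as stated is that it never uses the hypothesis $d\le\frac{n+2}{2}$, which you dismiss as merely delimiting the regime in which the bound is informative. In fact this hypothesis is essential to the \emph{truth} of the statement: there exist EAQEC codes with $d>\frac{n+2}{2}$ that violate $2(d-1)\le n-k+c$, which is exactly why the corrected formulation in \cite{LA18} carries this restriction. A purported proof that nowhere invokes $d\le\frac{n+2}{2}$ would prove a false statement and must therefore contain an error. The information-theoretic argument you relegate to an ``alternative'' (purification with a reference system, erasure correctability, subadditivity of entropy, no-cloning) is the correct general route and is where the hypothesis $d\le\frac{n+2}{2}$ actually enters; if you want a self-contained proof you should develop that argument rather than the algebraic one. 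For the purposes of this paper a citation suffices, since the theorem is only used to certify that the constructed codes are MDS, and those codes all satisfy $d\le\frac{n+2}{2}$.
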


	Specifically, an EAQEC code is called an EAQMDS code if it exactly satisfy  this bound.
	Recently, by using constacyclic codes(including cyclic codes and negacyclic codes), generalized Reed-Solomon(GRS) codes as well as extended GRS codes,
	numerous EAQMDS codes have been constructed in a variety of ways.	
	In \cite{FCX16}, due to classical MDS codes of a certain code length, Fan et al. derived several classes of EAQMDS codes with a small number of pre-shared maximally entangled states.
	Subsequently, Lu et al.\cite{LL14} determined the number of maximally entangled states via the decomposition of the defining set of BCH codes and many EAQMDS codes with larger minimum distances were constructed.
	This approach was extended to constacyclic codes by Lu et al. \cite{LLGML18} and Chen et al. \cite{CHFC17}, respectively, and several new classes of EAQMDS codes were derived.
	Thereafter, by using the decomposition of the defining set of constacyclic codes, many classes of EAQMDS codes of lengths dividing $q^2\pm 1$ have been derived
	(see \cite{CZJL21, LMLMLC18, CZK18, LLLM18, PZW21,WZS20, WWZ22, SK19,K19,SK21,GZLF23} and the relevant references therein).
	At the same time, due to the excellent algebraic structure of the GRS codes, a significant number of EAQMDS codes were also constructed from GRS codes.
	In \cite{GJG18}, Guenda et al. developed a relationship between the hull of the classical linear error-correcting codes
	and the number of maximally entangled states which is needed to construct EAQEC codes.
	Additionally, by studying the hull of the GRS codes, they also derived some new EAQMDS codes. 
	Inspired by the ideas of \cite{GJG18},  many classes of MDS codes with hulls of arbitrary dimensions were studied by Luo et al.\cite{LCC19} 
	via GRS codes, and new classes of EAQMDS codes were obtained.
	Besides, a number of other EAQMDS codes were derived via GRS codes as well as extended GRS codes (see \cite{LC19,LZLK19,FFLZ20,TZ20,GLLW20,QZ19}).

	Notably, the lengths of most EAQMDS codes mentioned above are divisors of $q^2\pm 1$. Therefore, scholars are also committed to construct  EAQMDS codes of lengths not dividing $q^2 \pm 1$.
	Guo et al.\cite{GL20} extended the lengths of EAQMDS codes by adding $1$ to the lengths in \cite{GLLW20}, so the lengths may not divide $q^2\pm 1$.
	Jin et al.\cite{JCL21} derived several new classes of EAQMDS codes from GRS codes over finite fields of odd characteristic $q$, 
	whose lengths may not be divisors of $q^2\pm 1$ and can reach $(q+1)(q-3)$.
	Very recently, Wang and Li \cite{WJ22} constructed two classes of EAQMDS codes of the lengths that are sums of two divisors of $q^2-1$ from GRS codes.
	
	Going on the line of the above work, we construct three classes of EAQMDS codes with parameters $[[n, n-2d+c+2, d; c]]_q$ based on GRS codes as follows:

	(1)  $n=\frac{b({q^2}-1)}{a}+\frac{{q^2}-1}{a}$, $a\vert (q+1)$, $a+b \equiv 1\ ({\rm mod}\ 2)$, $2\leq d\leq \frac{a+b+1}{2}\cdot \frac{q+1}{a}$ and $c=b+1$.
	
	(2)  $n=\frac{b({q^2}-1)}{a}+\frac{{q^2}-1}{a}$, $a\vert (q+1)$, $a+b \equiv 0\ ({\rm mod}\ 2)$, $2\leq d\leq \frac{a+b+2}{2}\cdot \frac{q+1}{a}-1$ and $c=b+1$.
	
	(3)  $n=\frac{b({q^2}-1)}{a}$, $a\vert (q-1)$, $(a,b)\neq (q-1,q-1)$, $2\leq d\leq  \frac{b(q-1)}{a}+1$ and $c=b$.

	Compared to standard QEC codes, the EAQEC codes enhance communication capabilities at the expense of pre-shared entanglement.
    In this paper, it is worth noting that some of the lengths of EAQMDS codes are the sum of two divisors of $q^2-1$,
	which implies that the lengths of our codes might not be divisors of $q^2-1$. Some of the lengths are new and have never been covered by the lengths available in the
    literature. This extends the length of the EAQMDS codes. Also, compared with known EAQMDS codes, our codes have larger minimum distances, which 
    enhance the  error-correction capability.
	The paper is structured as follows. Some basic concepts and results about GRS codes and EAQEC codes are reviewed in Section \ref{sec2}.
	In Section \ref{sec3}, three new classes of EAQMDS codes are obtained from GRS codes. 
	Section \ref{sec4} gives a conclusion.

	\section{Preliminaries}\label{sec2}
	
	Let $\mathbb{F}_{q^2}$ be the finite field with $q^2$ elements, where $q$ is a prime power.  
	A $q^2$-ary linear code $\mathcal{C}$ is denoted by $[n,k,d]_{q^2}$ if its length is $n$, its dimension is $k$ 
	and its minimum distance is $d$, which is a linear subspace of $\mathbb{F}^n_{q^2}$.
	Moreover, the minimum distance $d$ of the code $\mathcal{C}$ must satisfy the following well-known bound.

	\begin{theorem}\label{th2.1}
		\cite{MS77}(Singleton bound)  An $[n,k,d]$ linear code $\mathcal{C}$ over $\mathbb{F}_{q^2}$ must satisfy
		$$n-k\geq d-1.$$
	\end{theorem}
	A linear code is called an maximum-distance-separable (MDS) code if the equality holds. 

	For any two vectors $\textbf{x}=(x_0,x_1,\ldots,x_{n-1})$ and $\textbf{y}=(y_0,y_1,\ldots,y_{n-1})$ belong to $\mathbb{F}_{q^2}^n$, their Euclidean and Hermitian inner product are defined as 
	$$\langle \textbf{x},\textbf{y}\rangle_E=x_0y_0+x_1y_1+\cdots+x_{n-1}y_{n-1},$$
	and 
	$$\langle \textbf{x},\textbf{y}\rangle_H=x_0y_0^q+x_1y_1^q+\cdots+x_{n-1}y_{n-1}^q,$$
	respectively.
	The Euclidean dual code and the Hermitian dual code of a linear code $\mathcal{C} $ with parameters $[n,k,d]_{q^2}$ is respectively defined by
	$$\mathcal{C}^{\bot_E}=\{\textbf{x}\in \mathbb{F}_{q^2}^n\vert \langle \textbf{x},\textbf{y}\rangle_E=0, \forall\  \textbf{y}\in \mathcal{C}\},$$
	$$\mathcal{C}^{\bot_H}=\{\textbf{x}\in \mathbb{F}_{q^2}^n\vert \langle \textbf{x},\textbf{y}\rangle_H=0, \forall\ \textbf{y}\in \mathcal{C}\}.$$
	The code $\mathcal{C}$ is said to be Euclidean (Hermitian) self-orthogonal if $\mathcal{C}\subseteq \mathcal{C}^{\bot_E}$ ( $\mathcal{C}\subseteq \mathcal{C}^{\bot_H}$).
	
	Let $\textbf{a}=(\alpha_1, \alpha_2, \ldots, \alpha_n) \in \mathbb{F}_{q^2}^n$ and $\textbf{v}=(v_1, v_2, \ldots, v_n) \in (\mathbb{F}^*_{q^2})^n$,
	where $\alpha_1, \alpha_2, \ldots, \alpha_n$ are $n$ distinct elements of $\mathbb{F}_{q^2}$ and
	$v_1, v_2, \ldots, v_n$ are $n$ nonzero elements of  $\mathbb{F}_{q^2}$ ($v_i$ can be the same).
	For each integer $k$ with $1\leq k\leq n$, let 
	$$\mathbb{F}_{q^2}^n[x]_k=\{f(x)\in \mathbb{F}_{q^2}^n[x] \vert deg(f(x))\leq k-1 \}.$$
	Then the GRS code of length $n$ and dimension $k$ is defined as:
	$$GRS_k(\textbf{a},\textbf{v})=\{v_1f(\alpha _1), v_2f(\alpha _2), \ldots, v_nf(\alpha _n) \vert for \ all \ f(x)\in \mathbb{F}_{q^2}^n[x]_k\}.$$
	As everyone knows, the GRS code is exactly an MDS code with parameters $[n,k,n-k+1]$ over $\mathbb{F}_{q^2}$ 
	and it has a generator matrix shown below.

	$$ G_k  =  \begin{pmatrix}
		v_1              & v_2                   & \cdots & v_n             \\
		v_1\alpha _1     & v_2\alpha_2           & \cdots & v_n\alpha_n                 \\
		\vdots           & \vdots                & \ddots & \vdots                 \\
		v_1\alpha _1^{k-1} & v_2\alpha_2 ^{k-1}   & \cdots & v_n\alpha_n^{k-1} \\
	
	\end{pmatrix}.$$
	
	The following famous result illustrates the dual code of a GRS code.

	\begin{theorem}\cite{MS77}\label{the2.2}
		The dual code of $GRS_k(\textbf{a},\textbf{v})$ is $GRS_{n-k}(\textbf{a},\textbf{v}^{'})$ for a vector $\textbf{v}^{'}=\{v_1^{'},v_2^{'},\ldots,v_n^{'}\}$,
		such that $v_i^{'}\neq 0$ for any $1\leq i\leq n$.
	\end{theorem}	
	
	\begin{remark}
		According to Theorem \ref{the2.2}, the dual code of an $[n,k,n-k+1]_{q^2}$ $GRS_k(\textbf{a},\textbf{v})$ is still an $[n,n-k,k+1]_{q^2}$ GRS code.
		Additionally, the parity-check matrix of $GRS_{n-k}(\textbf{a},\textbf{v}^{'})$ is the generator matrix of $GRS_k(\textbf{a},\textbf{v})$.
	
	\end{remark}

	For any $\eta \in \mathbb{F}_{q^2}$, the conjugate of $\eta$ is defined as $\overline{\eta}=\eta^q $.
	Let $A=(a_{ij})_{m \times n}$ be an $m \times n$ matrix over $\mathbb{F}_{q^2}$, and define the conjugate of $A$ by $\overline{A}=(\overline{a_{ij}})_{m \times n}$.
	Suppose that $A^\dagger$ is the conjugate transpose of $A$, where $A^\dagger = (\overline{A})^T$ and $A^T$ denotes the transpose of $A$.

	Finally, we give a method to construct EAQEC codes using classical linear error-correcting codes under the Hermitian case.
	\begin{theorem}\cite{BDH06,WB08}\label{the2.3}
		Let $\mathcal{C}$ be an $[n,k,d]_{q^2}$ classical linear error-correcting code, whose  parity-check matrix is $H$ and $H^\dagger $ is the conjugate transpose of $H$, 
		then one can get an $[[n,2k-n+c,d,c]]_q$ EAQEC code, where $c=rank(HH^\dagger )$.
	\end{theorem}

	\section{New EAQMDS codes from GRS codes}\label{sec3}
	In this section, assume that $q$ is a prime power. We will derive three classes of $q$-ary EAQMDS codes by utilizing GRS codes over $\mathbb{F}_{q^2}$.
	Let $\xi$ be a fixed primitive element of $\mathbb{F}_{q^2}$. Suppose that $t=\frac{q^2-1}{a}$, where $a\vert (q+1)$ or  $a \vert (q-1)$.
	Let $\beta =\xi^a$, then $\text{ord}(\beta)=t$.

    \subsection{New EAQMDS codes of length $n=\frac{b({q^2}-1)}{a}+\frac{{q^2}-1}{a}$ with $a\vert (q+1)$}
	In this subsection, we will obtain new $q$-ary EAQMDS codes of length $n=\frac{b({q^2}-1)}{a}+\frac{{q^2}-1}{a}$ with $a\vert (q+1)$ 
	from GRS codes. In the following, we will consider GRS codes of length $n$ by splitting $a+b$ into two cases, i.e., $a+b\equiv 1\ ({\rm mod}\ 2)$ and $a+b\equiv 0\ ({\rm mod}\ 2)$.

	\subsubsection{The case $a+b\equiv 1\ ({\rm mod}\ 2)$}
	To begin with, we give some significant lemmas that will take on essential roles in our constructions.

	\begin{lemma}\label{le3.1}
		Suppose that $a+b\equiv 1\ ({\rm mod}\ 2)$ and $m=\frac{a-b+1}{2}$. If $b\leq min\{a-3,q-3\}$, then there exists a vector $\bm{\rho}=(\rho_0,\rho_1,\ldots, \rho_{b})\in (\mathbb{F}^*_q)^{b+1}$
		such that the following sums 
		$$\sum_{l=0}^{b}\rho_l,\ \sum_{l = 0}^{b}\xi^{mlt}\rho_l,\ \sum_{l = 0}^{b}\xi^{(m+1)lt}\rho_l,\ \ldots,\ \sum_{l=0}^{b} \xi^{(m+b-1)lt}\rho_l$$
		are all non-zeros.

	\end{lemma}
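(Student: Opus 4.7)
Set $\gamma = \xi^t$; since $\xi$ is primitive in $\mathbb{F}_{q^2}$ and $t = (q^2-1)/a$, the element $\gamma$ has multiplicative order exactly $a$. Each of the $b+1$ sums in the statement then takes the uniform form
$$S_i(\bm{\rho}) \;=\; \sum_{l=0}^{b} \gamma^{il}\,\rho_l, \qquad i \in I := \{0,\, m,\, m+1,\, \ldots,\, m+b-1\}.$$
A short check using $m = (a-b+1)/2$ and $b \le a-3$ shows that $1 \le m$ and $m+b-1 \le a-1$, so $|I| = b+1$ and the exponents lie in $[0,a)$, hence are pairwise distinct modulo $a$.

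The plan is to freeze $\rho_1 = \rho_2 = \cdots = \rho_b = 1$ and vary only $\rho_0 \in \mathbb{F}_q^*$. Under this ansatz each $S_i$ becomes an affine function of $\rho_0$,
$$S_i \;=\; \rho_0 + c_i, \qquad c_i \;:=\; \sum_{l=1}^{b} \gamma^{il} \in \mathbb{F}_{q^2},$$
so the simultaneous nonvanishing condition $S_i \ne 0$ for all $i \in I$ becomes the single requirement $P(\rho_0) \ne 0$, where
$$P(x) \;:=\; \prod_{i \in I}\bigl(x + c_i\bigr).$$
The crucial observation is that $P$ is monic of degree exactly $b+1$ in $x$: each factor contributes leading coefficient $\gamma^{i\cdot 0} = 1$, so no cancellation occurs. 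Consequently $P$ has at most $b+1$ roots in $\mathbb{F}_{q^2}$, and in particular at most $b+1$ roots in $\mathbb{F}_q$.

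Since $b \le q-3$, we have $|\mathbb{F}_q^*| = q - 1 \ge b + 2 > b + 1$, so strictly more elements of $\mathbb{F}_q^*$ are available than can possibly be zeros of $P$. Hence some $\rho_0 \in \mathbb{F}_q^*$ satisfies $P(\rho_0) \ne 0$, and the choice $\bm{\rho} = (\rho_0, 1, 1, \ldots, 1) \in (\mathbb{F}_q^*)^{b+1}$ makes every required sum nonzero.

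The only delicate point is the monicity of $P$ as a polynomial in $\rho_0$: this is what pins the degree at $b+1$ and makes the root-counting step work. The hypothesis $b \le q-3$ powers the strict inequality $q - 1 > b + 1$; the stronger hypothesis $b \le a-3$ plays the supporting role of keeping $I$ inside $[0,a)$ (though $b \le a-1$ would already suffice for that alone), and is presumably needed in the stronger form for the downstream EAQMDS construction that consumes this lemma.
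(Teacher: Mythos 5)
Your proof is correct, and it takes a genuinely different (and much shorter) route than the paper's. The paper treats all $b+1$ coordinates of $\bm{\rho}$ as unknowns: it writes the $b+1$ conditions as a Vandermonde system $A\bm{\rho}^T=\bm{\varphi}^T$, uses the conjugation relation $\phi_i^q=\phi_{b+1-i}$ to replace it by an equivalent system over $\mathbb{F}_q$, and then runs an inclusion--exclusion count showing that, as the free parameters $\varphi_0,\ldots,\varphi_{\frac{b+1}{2}}$ range over $\mathbb{F}_q^*$, at least one of the resulting solutions has all coordinates nonzero; the hypothesis $b\leq q-3$ enters only at the end, to make $(q-1)^{\frac{b+3}{2}}-(b+1)(q-1)^{\frac{b+1}{2}}>0$. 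You instead freeze $\rho_1=\cdots=\rho_b=1$ and observe that each sum is a nonconstant affine function of the single remaining variable $\rho_0$ (its leading coefficient is $\gamma^{i\cdot 0}=1$), so each of the $b+1$ conditions excludes at most one element of $\mathbb{F}_q^*$, and $b\leq q-3$ gives $q-1>b+1$. Two small remarks: the polynomial $P$ and its monicity are dispensable, since you only need that each factor $x+c_i$ vanishes at most once on $\mathbb{F}_q^*$ (and it does not matter whether $-c_i$ even lies in $\mathbb{F}_q$); and the distinctness of the exponents in $I$ is likewise not needed for your argument, whereas it is essential for the paper's invertibility of $A$. What the paper's longer route buys is a quantitative lower bound on the number of admissible $\bm{\rho}$ and the freedom to prescribe the (Galois-compatible) values of the sums; for the lemma as stated, and for its use in Lemma \ref{le3.3} and Theorem \ref{the3.1}, bare existence suffices, so your argument is enough. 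The same one-variable pigeonhole also disposes of Lemma \ref{le3.4}, where the coefficient of $\rho_0$ in each sum is again $1$.
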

	
	\begin{proof}
	We have to proof that there exists a vector $\bm{\varphi}=(\varphi _0,\varphi _1,\ldots,\varphi _{b})\in  (\mathbb{F}^*_q)^{b+1} $
	such that the group of equations below has a non-zero solution  $\bm{\rho} =(\rho _0,\rho _1,\ldots, \rho _{b})\in (\mathbb{F}^*_q)^{b+1}$.
	\begin{equation}
	    \left\{ 
		\begin{array}{l}
		\sum_{l=0}^{b}\rho_l=\varphi _0,\\
		\\
		\sum_{l=0}^{b}\xi^{mlt}\rho_l=\varphi_1,\\
		\\
		\sum_{l=0}^{b}\xi^{(m+1)lt}\rho_l=\varphi_2,\\
		\\
		\ldots\\
		\\
		\sum_{l = 0}^{b}\xi^{(m+b-1)lt}\rho_l=\varphi_{b}.
	    \end{array}
	    \right.
    \end{equation}
	
	Let $\zeta=\xi^t$.  Denote 
	$$ A=\begin{pmatrix}
		1      & 1              & \cdots & 1 \\
		1      & \zeta^m       & \cdots & \zeta^{bm} \\
		1      & \zeta^{m+1}   & \cdots & \zeta^{b(m+1)}\\
		\vdots & \vdots         & \ddots & \vdots \\
		1      & \zeta^{m+b-1} & \cdots & \zeta^{b(m+b-1)}
	\end{pmatrix}.$$
	The system of Eq.(3.1) could be characterized by the matrix form
	$$A\bm{\rho}^T=(\varphi_0,\varphi_1,\ldots,\varphi_{b})^T=\bm{\varphi}^T.$$
	It can be easily derived that $\text{det}(A)\neq 0$ due to the fact that $A$ is a Vandermonde matrix and 
	$1, \zeta ^m, \zeta ^{m+1}$, $\ldots$, $\zeta ^{m+b-1}$ are all different because of $\zeta $ is a primitive $a$-th root of unity.
	Hence, for a fixed vector $\bm{\varphi}$, the system of Eq.(3.1) has a sole solution $\bm{\rho}=(\rho_0,\rho_1,\ldots, \rho_{b})\in (\mathbb{F}_q)^{b+1}$.
	Next, we will proof that the system of Eq.(3.1) has a non-zero solution.

	Let $\phi_0=(1,1,\ldots,1)$, $\phi_i=(1,\xi^{t(m+i-1)},\ldots, \xi^{tb(m+i-1)})$, for $i=1,\ldots,b$. 
	Then the system of Eq.(3.1) is changed to

	\begin{equation}
		\left\{	 
		\begin{array}{c}
			\phi_0\bm{\rho}^T =\varphi_0, \\
			\phi_1\bm{\rho}^T =\varphi_1,\\
			\phi_2\bm{\rho}^T =\varphi_2,\\
			\vdots \\

			\phi_{b}\bm{\rho}^T =\varphi_{b}.

		\end{array}
		\right.
	\end{equation}

	It can be trivially verified that $\phi^q_i=\phi_{b+1-i}$, for $i=1,\ldots, b$. 
	Hence, 	by taking the $q$-th powers of all the equations in (3.2), one can get 

	\begin{equation}
		\left\{	 
		\begin{array}{c}
			\phi_0\bm{\rho}^T =\varphi_0, \\
			\phi_{b}\bm{\rho}^T =\varphi_1,\\
			\phi_{b-1}\bm{\rho}^T =\varphi_2,\\
			\vdots \\

			\phi_{2}\bm{\rho}^T =\varphi_{b-1},\\
			\phi_{1}\bm{\rho}^T =\varphi_{b}.
		\end{array}
		\right.
	\end{equation}

	Let

	$$\lambda=\left\{ \begin{array}{l}
		\xi^{q+1}        , q\ is\  a\  power\  of\  2,\\
		\xi^\frac{q+1}{2}, q\  is\  an\  odd\  prime\  power .
	\end{array} \right.$$
	
    It is easy to proof that $\lambda^q=-\lambda$. We now  divide $b$ into two cases as follows:
	
	(1) If $b$ is odd, then it can be easily proved that the system of Eq.(3.2) is equivalent to the system of Eq.(3.4) according to (3.3).

	\begin{equation}
		\left\{	 
		\begin{array}{c}
			\phi_0\bm{\rho}^T=\varphi_0, \\
			\\
			(\phi_1+\phi_{b})\bm{\rho}^T=2\varphi_1,\\
			\\
			\vdots \\
			\\
			(\phi_{\frac{b-1}{2}}+\phi_{\frac{b+3}{2}})\bm{\rho}^T=2\varphi_{\frac{b-1}{2}},\\
			\\
			\phi_{\frac{b+1}{2}}\bm{\rho}^T=\varphi_{\frac{b+1}{2}},\\
			\\
			\frac{\phi_1-\phi_{b}}{\lambda }\bm{\rho}^T=0,\\
			\\
			\vdots \\
			\frac{\phi_{\frac{b-1}{2}}-\phi_{\frac{b+3}{2}}}{\lambda}\bm{\rho}^T =0.
		\end{array}
		\right.
	\end{equation}

    Denote
	$$B=\begin{pmatrix}
		\phi_0 \\
		\phi_1+\phi_{b} \\
		\vdots \\
		\phi_{\frac{b-1}{2}}+\phi_{\frac{b+3}{2}} \\
		\phi_{\frac{b+1}{2}}\\
		\frac{\phi_1-\phi_{b}}{\lambda } \\ 
		\vdots \\
		\frac{\phi_{\frac{b-1}{2}}-\phi_{\frac{b+3}{2}}}{\lambda}

	\end{pmatrix}.$$
	
	Obviously, each element of the matrix $B$ belongs to $\mathbb{F}_q$,
	then the system of Eq.(3.4) is represented as $B\bm{\rho}^T=(\varphi _0, 2\varphi _1, \ldots, 2\varphi_{\frac{b-1}{2}},\varphi_{\frac{b+1}{2}}, 0, \ldots, 0)^T$.
	Suppose that
	$$U=\{\bm{\rho}\in\mathbb{F}^b_q \vert B\bm{\rho}^T=(\varphi_0, 2\varphi_1, \ldots, 2\varphi_{\frac{b-1}{2}},\varphi _{\frac{b+1}{2}}, 0, \ldots, 0)^T\}.$$
	For any given non-zero elements $\varphi_0,\varphi_1, \ldots, \varphi _{\frac{b+1}{2}},$ the system of Eq.(3.4) in $U$ has a sole solution.

	When $\varphi_0,\varphi_1,\ldots,\varphi_{\frac{b+1}{2}}$ take all nonzero elements over $\mathbb{F}_q$ respectively, then one obtains
	$$\vert U \vert=(q-1)^{\frac{b+3}{2}}.$$
	Let 
	$$U_i=\{\bm{\rho}\in \mathbb{F}^b_q \vert\rho_i=0,B\bm{\rho}^T=(\varphi_0, 2\varphi_1, \ldots, 2\varphi_{\frac{b-1}{2}}, \varphi_{\frac{b+1}{2}}, 0, \ldots, 0)^T \},$$ 
	where $i=0, 1,\ldots,b$,
	and 
	$$S=\{\bm{\rho}\in (\mathbb{F}^*_q)^b \vert B\bm{\rho}^T =(\varphi_0, 2\varphi_1, \ldots, 2\varphi_{\frac{b-1}{2}}, \varphi_{\frac{b+1}{2}}, 0, \ldots, 0)^T\}. $$
	If $\rho_i=0$, then $\varphi_0, 2\varphi_1, \ldots, 2\varphi_{\frac{b-1}{2}}, \varphi_{\frac{b+1}{2}} $ must meet a linear equation with nonzero coefficients. So 
	$$\vert U_i \vert=(q-1)^{\frac{b+1}{2}}.$$ 
	Accordingly, $S=U\setminus (U_0\cup U_1\cup \cdots \cup U_{b})$. Hence, 
	$\vert S\vert= \vert U \vert-\sum_{i=0}^{b} \vert U_i \vert +\chi$ due to the inclusion-exclusion principle, where $\chi \geq 0$. So
	$$\vert S\vert=(q-1)^{\frac{b+3}{2}}-(b+1)(q-1)^{\frac{b+1}{2}}+\chi.$$
	It is obvious that $\vert S\vert>0$ because of $b\leq q-3$.
	
	(2) If $b$ is even, then the system of Eq.(3.2) is an equivalence of the following system of Eq.(3.5). 
	The remainder of the proof is quite comparable to the case $b$ is odd and we omit it here.

	\begin{equation}
		\left\{	 
		\begin{array}{c}
			\phi_0\bm{\rho}^T =\varphi_0, \\
			\\
			(\phi_1+\phi_{b})\bm{\rho}^T=2\varphi_1,\\
			\\
			\vdots \\
			\\
			(\phi_{\frac{b}{2}}+\phi_{\frac{b+2}{2}})\bm{\rho}^T =2\varphi_{\frac{b}{2}},\\
			\\
			\frac{\phi_1-\phi_{b}}{\lambda }\bm{\rho}^T =0,\\
			\\
			\vdots \\
			\frac{\phi_{\frac{b}{2}}-\phi_{\frac{b+2}{2}}}{\lambda }\bm{\rho}^T =0.
		\end{array}
		\right.
	\end{equation}

	Therefore, there exists a vector $\bm{\rho}=(\rho_0,\rho_1,\ldots, \rho _{b})\in (\mathbb{F}^*_q)^{b+1}$ such that 
	$$\sum_{l = 0}^{b} \rho_l,\ \sum_{l = 0}^{b} \xi ^{mlt}\rho_l,\ \sum_{l = 0}^{b} \xi ^{(m+1)lt}\rho_l,\ \ldots,\ \sum_{l = 0}^{b} \xi^{(m+b-1)lt}\rho_l$$
	are all non-zeros.

    \end{proof}

    \begin{lemma}\label{le3.2}
	Let $a+b \equiv 1 \ ({\rm mod}\ 2)$ and $b\leq a-3$. If $0\leq i,j\leq \frac{a+b+1}{2}\cdot\frac{q+1}{a}-2$ with $(i,j)\neq (0,0)$,
	then $t \vert (qi+j)$ if and only if $qi+j=\upsilon  t$ with $\frac{a-b+1}{2}\leq \upsilon  \leq \frac{a+b-1}{2}$.
    \end{lemma}

    \begin{proof}
	As $0\leq i,j\leq \frac{a+b+1}{2}\cdot\frac{q+1}{a}-2$ with $(i,j)\neq (0,0)$, then $0\leq i,j \leq q-1-\frac{q+1}{a}$ and $0<qi+j\leq q^2-1-\frac{(q+1)^2}{a}$.
	If $t \vert (qi+j)$, then an integer $\upsilon $ exists such that $qi+j=\upsilon  t$ with $0<\upsilon <a-1$. Also, we have  
	$$qi+j=q[\frac{\upsilon (q+1)}{a}-1]+[q-\frac{\upsilon (q+1)}{a}].$$
	Then 
	$$i=\frac{\upsilon (q+1)}{a}-1,\ j=q-\frac{\upsilon (q+1)}{a}.$$
	Since  $0\leq i,j\leq \frac{a+b+1}{2}\cdot \frac{q+1}{a}-2$, one can get
	$$\frac{a-b-1}{2}+\frac{a}{q+1}\leq \upsilon \leq\frac{a+b+1}{2}-\frac{a}{q+1},$$
	which implies that $\frac{a-b+1}{2}\leq \upsilon \leq\frac{a+b-1}{2}$.
	Therefore, $t\vert (qi+j)$ if and only if $qi+j=\upsilon  t$ with $\frac{a-b+1}{2}\leq \upsilon  \leq \frac{a+b-1}{2}$.
    \end{proof}

    \begin{lemma}\label{le3.3}
	
	Let $n=\frac{b(q^2-1)}{a}+\frac{q^2-1}{a}$, and $b\leq min\{a-3,q-3\}$. Assume that $\bm{\tau} =(1,\beta,\beta ^2,\ldots,\beta^{t-1})\in \mathbb{F}_{q^2}^t$ and $\textbf{a}=(\bm\tau, \xi\bm\tau, \xi ^2\bm\tau, \ldots, \xi ^{b}\bm\tau)\in \mathbb{F}_{q^2}^n$.
	Let 
	$$\textbf{v}=(v_0, v_0, \ldots, v_0, \ldots, v_{b}, v_{b}, \ldots, v_{b})_{(1\times (b+1)t)}\in (\mathbb{F}^*_{q^2})^n,$$
	where $v_l^{q+1}=\rho _l$ with $0\leq l\leq {b}$, 
	and $\bm{\rho} =(\rho _0,\rho _1,\ldots, \rho _{b})\in (\mathbb{F}^*_q)^{b+1}$ is a vector satisfy Lemma \ref{le3.1}.
	Then $\langle \textbf{a}^{qi+j}, \textbf{v}^{q+1} \rangle_E \neq 0$ if and only if $(i,j)=(0,0)$ or $qi+j=\upsilon t$ with $\frac{a-b+1}{2}\leq \upsilon \leq \frac{a+b-1}{2}$, 
	where $0\leq i,j\leq \frac{a+b+1}{2}\cdot \frac{q+1}{a}-2$.

    \end{lemma}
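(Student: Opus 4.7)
The plan is to compute the Euclidean inner product $\langle \textbf{a}^{qi+j}, \textbf{v}^{q+1}\rangle_E$ directly, reindex the sum to separate the ``inner loop'' over $\bm\tau$ from the ``outer loop'' over the $\xi^l$ factors, and then recognize the resulting expressions as precisely the sums guaranteed to be non-zero by Lemma \ref{le3.1}, with the index set pinned down by Lemma \ref{le3.2}.

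Concretely, I would index the coordinates of $\textbf{a}$ by pairs $(l,s)$ with $0\le l\le b$ and $0\le s\le t-1$, so that the $(l,s)$ entry of $\textbf{a}$ is $\xi^l\beta^s=\xi^{l+as}$ and the corresponding entry of $\textbf{v}^{q+1}$ is $\rho_l$. Raising to the $(qi+j)$-th power and summing gives
\begin{equation*}
\langle \textbf{a}^{qi+j}, \textbf{v}^{q+1}\rangle_E=\sum_{l=0}^{b}\rho_l\,\xi^{l(qi+j)}\sum_{s=0}^{t-1}\beta^{s(qi+j)}.
\end{equation*}
Since $\beta$ has multiplicative order $t$ and $\gcd(p,t)=1$ (because $p\nmid q^2-1$), the inner geometric sum equals $t\neq 0$ when $t\mid (qi+j)$ and vanishes otherwise. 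Therefore the whole inner product is zero whenever $t\nmid (qi+j)$.

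Next I would split into the two surviving cases. If $(i,j)=(0,0)$, the expression reduces to $t\sum_{l=0}^{b}\rho_l$, which is non-zero by the first sum in Lemma \ref{le3.1}. If instead $(i,j)\neq (0,0)$ but $t\mid (qi+j)$, Lemma \ref{le3.2} forces $qi+j=\upsilon t$ with $\tfrac{a-b+1}{2}\le \upsilon \le \tfrac{a+b-1}{2}$. Writing $\upsilon=m+r$ with $m=\tfrac{a-b+1}{2}$ and $0\le r\le b-1$, the inner product becomes $t\sum_{l=0}^{b}\xi^{(m+r)lt}\rho_l$, which is exactly one of the $b$ non-trivial sums asserted to be non-zero in Lemma \ref{le3.1}. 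Conversely, any such $(i,j)$ does arise from the parametrization in Lemma \ref{le3.2}, so the ``only if'' and ``if'' directions match.

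I do not expect a genuine obstacle here: the proof is essentially bookkeeping once the right reindexing is in place. The one spot that requires a little care is ensuring the exponent $\upsilon$ obtained from Lemma \ref{le3.2} lands precisely in the window $\{m,m+1,\dots,m+b-1\}$ on which Lemma \ref{le3.1} produces non-vanishing sums; this is an immediate check since $m=\tfrac{a-b+1}{2}$ and $m+b-1=\tfrac{a+b-1}{2}$. A secondary subtlety is confirming $t\ne 0$ in $\mathbb{F}_{q^2}$, which follows from $p\nmid q^2-1$, and this is what allows the factor of $t$ to be discarded when drawing the non-vanishing conclusion.
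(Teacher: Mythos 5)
Your proposal is correct and follows essentially the same route as the paper's proof: expand the inner product as $\sum_{l}\rho_l\xi^{l(qi+j)}\sum_{s}\beta^{s(qi+j)}$, kill the geometric sum unless $t\mid(qi+j)$, handle $(i,j)=(0,0)$ via the first sum in Lemma \ref{le3.1}, and use Lemma \ref{le3.2} to pin the surviving exponents to the window $\frac{a-b+1}{2}\le\upsilon\le\frac{a+b-1}{2}$ covered by Lemma \ref{le3.1}. Your explicit remark that $t\not\equiv 0$ in $\mathbb{F}_{q^2}$ is a small point the paper leaves implicit, but otherwise the arguments coincide.
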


    \begin{proof}
	If $(i,j)=(0,0)$, it follows from Lemma \ref{le3.1} that
	$$\langle \textbf{a}^{qi+j}, \textbf{v}^{q+1} \rangle_E=\langle \textbf{a}^0, \textbf{v}^{q+1} \rangle_E=t(v^{q+1}_0+\cdots+v^{q+1}_{b})=t(\rho _0+\cdots+\rho _{b})\neq 0.$$
	If $(i,j)\neq (0,0)$, then
	$$\langle \textbf{a}^{qi+j}, \textbf{v}^{q+1} \rangle_E=\sum_{l = 0}^{b}\xi ^{(qi+j)l}v_l^{q+1}\sum_{s=0}^{t-1}\beta ^{s(qi+j)}.$$
	Notably,           

	$$\sum_{s=0}^{t-1}\beta ^{s(qi+j)}=\left\{ \begin{array}{l}
		0,  t \nmid (qi+j) ,\\
		t,   t \mid (qi+j).
	\end{array} \right.$$
    Based on Lemma \ref{le3.2}, $\sum_{s=0}^{t-1}\beta ^{s(qi+j)}=t$ if and only if $qi+j=\upsilon t$ with $\frac{a-b+1}{2}\leq \upsilon  \leq \frac{a+b-1}{2}$, then by Lemma \ref{le3.1}, one can get 
    $\langle \textbf{a}^{qi+j}, \textbf{v}^{q+1} \rangle_E=t\sum_{l = 0}^{b}\xi ^{\upsilon tl}v_l^{q+1}=t\sum_{l = 0}^{b}\xi ^{\upsilon tl}\rho_l \neq 0.$ The result holds.

    \end{proof}

\begin{theorem}\label{the3.1}
Let $n=\frac{b(q^2-1)}{a}+\frac{q^2-1}{a}$, where q is a prime power, $b\leq min\{a-3,q-3\}$, $a\vert (q+1)$ and $a+b\equiv 1\ ({\rm mod}\ 2) $,
then one can get an EAQMDS code with parameters $[[n, n-2d+c+2, d; c]]_q$, where $2\leq d\leq \frac{a+b+1}{2}\cdot \frac{q+1}{a}$ and $c=b+1$.

\end{theorem}

\begin{proof}
	Assume that there exists a GRS code, denoted as $GRS_k(\textbf{a},\textbf{v})$,  associated with vectors \textbf{a} and \textbf{v}, where \textbf{a} and \textbf{v} are given in Lemma \ref{le3.3}. 
	One can get $GRS_k(\textbf{a},\textbf{v})$ has a generator matrix as below.
	$$ G_k  =  \begin{pmatrix}
		v_0    & v_0             & \cdots & v_0                     & \cdots & v_{b}                      & \cdots & v_{b}      \\
		v_0    & v_0\beta       & \cdots & v_0\beta ^{t-1}         & \cdots & v_{b}\xi ^{b}            & \cdots & v_{b}(\xi^{b}\beta ^{t-1})         \\
		v_0    & v_0\beta ^2     & \cdots & v_0(\beta ^{t-1})^2     & \cdots & v_{b}(\xi^{b})^2        & \cdots & v_{b}(\xi^{b}\beta ^{t-1})^2             \\
		\ldots & \ldots          & \ldots & \ldots                  & \ldots & \ldots                   & \ldots & \ldots        \\
		v_0    & v_0\beta ^{k-1} & \cdots & v_0(\beta ^{t-1})^{k-1} & \cdots & v_{b}(\xi^{b})^{k-1}    & \cdots & v_{b}(\xi^{b}\beta ^{t-1})^{k-1}
	
	\end{pmatrix}.$$
	According to Theorem \ref{the2.2}, there exists a $GRS_{n-k}(\textbf{a},\textbf{v}^{'})$ with parameters $[n,n-k,k+1]$, whose parity-check matrix is $G_k$. By calculation,
	one can get 
	
	$$ G_kG^\dagger _k  =  \begin{pmatrix}
		\sigma_{0,0}    & \sigma_{1,0}       & \cdots & \sigma_{k-1,0}                 \\
		\sigma_{0,1}    & \sigma_{1,1}       & \cdots & \sigma_{k-1,1}                \\
		\sigma_{0,2}    & \sigma_{1,2}       & \cdots & \sigma_{k-1,2}                  \\
		\ldots          & \ldots             & \ldots &  \ldots                       \\
		\sigma_{0,k-1}  & \sigma_{1,k-1}     & \cdots & \sigma_{k-1,k-1}

	\end{pmatrix},$$
	where $\sigma_{i,j}=\langle \textbf{a}^{qi+j}, \textbf{v}^{q+1} \rangle_E$.
	
	Based on Lemma \ref{le3.3}, $\sigma_{i,j}\neq 0$ if and only if $(i,j)=(0,0)$ or $qi+j=\upsilon t$ with $\frac{a-b+1}{2}\leq \upsilon \leq \frac{a+b-1}{2}$.
	
	If there exist $i_1=i_2=i$ such that $qi+j_1=\upsilon_1 t$ and $qi+j_2=\upsilon_2 t$, where $j_1\neq j_2$, then $j_1-j_2=(\upsilon_1-\upsilon_2)t$. 
	In fact, $\vert j_1-j_2\vert < q-1$. However, $\vert (\upsilon_1-\upsilon_2)t\vert =\vert (\upsilon_1-\upsilon_2)\frac{q+1}{a}(q-1) \vert \geq q-1$.
	So $\sigma_{i,j}\neq 0$ cannot appear in the same row of the matrix.
	
	If there exist $j_1=j_2=j$ such that $qi_1+j=\upsilon_1 t$ and $qi_2+j=\upsilon_2 t$, where $i_1\neq i_2$, then $q(i_1-i_2)=(\upsilon_1-\upsilon_2)t=(\upsilon_1-\upsilon_2)\frac{q^2-1}{a}$.
	Therefore, $q \vert (\upsilon_1-\upsilon_2) $, which contradicts to the fact that $\vert \upsilon_1-\upsilon_2\vert \leq b-1 <q $. So $\sigma_{i,j}\neq 0$ cannot appear in the same column of the matrix.
	
	Hence, for $0\leq i,j\leq \frac{a+b+1}{2}\cdot \frac{q+1}{a}-2$,  $\sigma_{i,j}\neq 0$ cannot occur in the same row and column of the matrix.
	Consequently, $rank(G_kG_k^\dagger )=b+1$. According to Theorem \ref{the2.3}, the EAQMDS codes are derived.

\end{proof}

\begin{remark}
	Taking $b=0$ in Theorem 3.1, then $a$ must be odd. Let $a=2\ell+1$ ($\ell\geq 1$),
	then one can get EAQMDS codes with the following parameters: 
	\begin{itemize}

		\item $[[\frac{q^2-1}{2\ell+1},\frac{q^2-1}{2\ell+1}-2d+3,d;1]]_q$, where $2\leq d\leq \frac{\ell+1}{2\ell+1}(q+1)$.
		
	\end{itemize}

	Actually, EAQMDS codes with the same length had also been constructed in \cite{FCX16} and \cite{LZLK19}. Taking $n=\frac{q^2-1}{2\ell+1}$ in Theorem 3 of \cite{FCX16} and $c=1$ in Theorem 3.7 of \cite{LZLK19},
	we can derive the following two subclasses of EAQMDS codes:
	\begin{itemize}

    \item $[[\frac{q^2-1}{2\ell+1},\frac{q^2-1}{2\ell+1}-2d+3,d;1]]_q,$ where $2\leq d \leq 2(\frac{q+1}{2\ell+1}-1)$.
    
	\item $[[\frac{q^2-1}{2\ell+1},\frac{q^2-1}{2\ell+1}-2d+3,d;1]]_q,$ where $\frac{q+1}{2\ell+1}+2\leq d \leq \frac{\ell+1}{2\ell+1}(q+1)$.
    \end{itemize}

	By comparing our results with the same length as in \cite{FCX16}, due to
	$\frac{t+1}{2t+1}(q+1)>2(\frac{q+1}{2t+1}-1)$,
	one can see that the minimum distance of our EAQMDS code is larger than theirs.
	While comparing our results with \cite{LZLK19}, one can easily see that our codes have the same largest minimum distance as theirs within such case, 
	but the minimum distance $d$ has a wider value range. 

\end{remark}

\begin{remark}
	Let $a\equiv 2\ ({\rm mod}\ 4) $ and $b=\frac{a}{2}$, then $a+b\equiv 1\ ({\rm mod}\ 2)$.
	According to Theorem 3.1, EAQMDS codes can be obtained with the following parameters.
	\begin{itemize}

		\item $[[\frac{q^2-1}{2}+\frac{q^2-1}{a},\frac{q^2-1}{2}+\frac{q^2-1}{a}-2d+c+2,d;c]]_q$, where $2\leq d \leq \frac{3(q+1)}{4}+\frac{q+1}{2a}$, $c=\frac{a}{2}+1$.
		
	\end{itemize}

	In fact, EAQMDS codes of length $n=\frac{q^2-1}{2}+\frac{q^2-1}{a}$ had already been studied in \cite{WJ22}, but the value of $c$ in \cite{WJ22} is incorrect.
	It should be $\frac{a}{2}+1$$($$\frac{b}{2}+1$ for their length$)$.
	Besides, it can be easily seen that our EAQMDS codes coincide with theirs.
	Therefore, the result of ours is a generalization of theirs.
\end{remark}

\begin{remark}
	Let $a\equiv 0\ ({\rm mod}\ 4) $ and $b=\frac{a}{2}+1$, then $a+b\equiv 1\ ({\rm mod}\ 2)$.
    Due to Theorem 3.1, one can get EAQMDS codes with parameters as follows: 
	\begin{itemize}

		\item $[[\frac{q^2-1}{2}+\frac{2(q^2-1)}{a},\frac{q^2-1}{2}+\frac{2(q^2-1)}{a}-2d+c+2,d;c]]_q$, where $2\leq d \leq \frac{3(q+1)}{4}+\frac{q+1}{a}$, $c=\frac{a}{2}+2$.
		
	\end{itemize}

EAQMDS codes of length $n=\frac{q^2-1}{2}+\frac{2(q^2-1)}{a}$ had been studied in \cite{WJ22}, but the $c$ in \cite{WJ22} is different from ours.
\end{remark}

\begin{example}
	We show some of the new EAQMDS codes of length $n = \frac{b({q^2}-1)}{a}+\frac{{q^2}-1}{a}$ with $a+b\equiv 1\ ({\rm mod}\ 2)$ obtained from Theorem \ref{the3.1} whose lengths are not  divisors of $q^2- 1$ in Table \ref{tab1}, 
	\begin{table}
		\caption{New EAQMDS codes of length $n=\frac{b({q^2}-1)}{a}+\frac{{q^2}-1}{a}$ with $a+b\equiv 1\ ({\rm mod}\ 2) $ }
		\begin{center}
			\begin{tabular}{ccccc}
				\toprule
				$ q $    &$a$& $b$ & $[[n,k,d;c]]_q$& $d$\\
				\midrule

				$8$ &$9$  & $4$   &$[[35,42-2d,d;5]]_{8}$  & $2\le d\le 7$  \\
				$9$ &$5$  & $2$   &$[[48,53-2d,d;3]]_{9}$ & $2\le d\le 8$ \\
				    &$10$ & $3$   &$[[32,38-2d,d;4]]_{9}$ & $2\le d\le 7$ \\
				    &$10$ & $5$   &$[[48,56-2d,d;6]]_{9}$  & $2\le d\le 8$  \\
				$11$ &$6$  & $3$   &$[[80,86-2d,d;4]]_{11}$ & $2\le d\le 10$ \\
				     &$12$  & $7$   &$[[80,90-2d,d;8]]_{11}$ & $2\le d\le 10$ \\
				$13$&$7$ & $2$   &$[[72,77-2d,d;3]]_{13}$ & $2\le d\le 10$ \\
				    & $7$ & $4$   &$[[120,127-2d,d;5]]_{13}$ & $2\le d\le 12$ \\
					&$14$ & $3$   &$[[48,54-2d,d;4]]_{13}$ & $2\le d\le 9$ \\
					&$14$ & $5$   &$[[72,80-2d,d;6]]_{13}$ & $2\le d\le 10$ \\
					&$14$ & $7$   &$[[96,106-2d,d;8]]_{13}$ & $2\le d\le 11$ \\
					&$14$ & $9$   &$[[120,132-2d,d;10]]_{13}$ & $2\le d\le 12$ \\
				$16$  &$17$& $2$   &$[[35,40-2d,d;3]]_{16}$ & $2\le d\le 10$ \\
				      & $17$ &$4$ &$[[75,82-2d,d;5]]_{16}$ & $2\le d\le 11$ \\
				      & $17$& $6$  &$[[105,114-2d,d;7]]_{16}$ & $2\le d\le 12$ \\
				      &$17$ & $8$   &$[[135,146-2d,d;9]]_{16}$ & $2\le d\le 13$ \\
				      & $17$& $10$ &$[[165,178-2d,d;11]]_{16}$ & $2\le d\le 14$ \\
				      &$17$ & $12$  &$[[195,210-2d,d;13]]_{16}$ & $2\le d\le 15$ \\
				$17$  &$9$ & $4$  &$[[160,167-2d,d;5]]_{17}$ & $2\le d\le 14$ \\
				      &$9$ & $6$  &$[[224,233-2d,d;7]]_{17}$ & $2\le d\le 16$ \\
				      &$18$ & $3$  &$[[64,70-2d,d;4]]_{17}$ & $2\le d\le 11$ \\
					  &$18$ & $7$  &$[[128,138-2d,d;8]]_{17}$ & $2\le d\le 13$ \\
					  &$18$ & $9$  &$[[160,172-2d,d;10]]_{17}$ & $2\le d\le 14$ \\
					  &$18$ & $11$  &$[[192,206-2d,d;12]]_{17}$ & $2\le d\le 15$ \\
					  &$18$ & $13$  &$[[224,240-2d,d;14]]_{17}$ & $2\le d\le 16$ \\
					  
				$19$  &$5$  & $2$   &$[[216,221-2d,d;3]]_{19}$ & $2\le d\le 16$ \\
		              &$10$ & $3$   &$[[144,150-2d,d;4]]_{19}$ & $2\le d\le 14$ \\
				      &$10$ & $5$   &$[[216,224-2d,d;6]]_{19}$ & $2\le d\le 16$ \\
				      &$10$ & $7$   &$[[288,298-2d,d;8]]_{19}$ & $2\le d\le 18$ \\
				      &$20$ & $5$   &$[[108,116-2d,d;6]]_{19}$ & $2\le d\le 13$ \\
					  &$20$ & $7$   &$[[144,154-2d,d;8]]_{19}$ & $2\le d\le 14$ \\
					  &$20$ & $11$  &$[[216,230-2d,d;12]]_{19}$ & $2\le d\le 16$ \\
					  &$20$ & $13$  &$[[252,268-2d,d;14]]_{19}$ & $2\le d\le 17$ \\
					  &$20$ & $15$  &$[[288,306-2d,d;16]]_{19}$ & $2\le d\le 18$ \\
				
				$23$ &$6$ & $3$   &$[[352,358-2d,d;4]]_{23}$ & $2\le d\le20$ \\
				     &$8$ & $5$   &$[[396,404-2d,d;6]]_{23}$ & $2\le d\le 21$ \\
					 &$12$ & $7$   &$[[352,362-2d,d;8]]_{23}$ & $2\le d\le 20$ \\
				     &$12$ & $9$   &$[[440,452-2d,d;10]]_{23}$ & $2\le d\le 22$ \\
				     &$24$ & $9$   &$[[220,232-2d,d;10]]_{23}$ & $2\le d\le 17$ \\
				     &$24$ & $13$   &$[[308,324-2d,d;14]]_{23}$ & $2\le d\le 19$ \\
			         &$24$ & $15$   &$[[352,370-2d,d;16]]_{23}$ & $2\le d\le 20$ \\
				     &$24$ & $17$   &$[[396,416-2d,d;18]]_{23}$ & $2\le d\le 21$ \\
				     &$24$ & $19$  &$[[440,462-2d,d;20]]_{23}$ & $2\le d\le  22$ \\

					  \bottomrule
			\end{tabular}
		\end{center}
		\label{tab1}
	\end{table}

\end{example}

\subsubsection{The case $a+b\equiv 0\ ({\rm mod}\ 2)$}
Likewise, we will first give some useful lemmas that will serve  essential roles in the construction.

	\begin{lemma}\label{le3.4}
		Suppose that $a+b \equiv 0\ ({\rm mod}\ 2)$ and $m=\frac{a-b}{2}$. If $b\leq min\{a-4, q-3\}$, then there exists a vector $\bm{\rho}=(\rho_0,\rho_1,\ldots, \rho_{b})\in (\mathbb{F}^*_q)^{b+1}$
		such that the following sums 
		$$ \sum_{l=0}^{b} \xi^{(mt-q-1)l}\rho_l,\ \sum_{l=0}^{b} \xi^{[(m+1)t-q-1]l}\rho_l,\ \sum_{l=0}^{b} \xi^{[(m+2)t-q-1]l}\rho_l, \ldots,\ \sum_{l=0}^{b} \xi^{[(m+b)t-q-1]l}\rho_l$$
		are all non-zeros.

	\end{lemma}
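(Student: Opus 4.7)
\medskip
\noindent\textbf{Proof proposal.} The plan is to mirror the proof of Lemma~\ref{le3.1} step by step. First, define $\phi_i=(1,\xi^{(m+i)t-q-1},\ldots,\xi^{b[(m+i)t-q-1]})$ for $0\le i\le b$, so that the $b+1$ sums in the statement are exactly $\phi_0\bm{\rho}^T,\ldots,\phi_b\bm{\rho}^T$. The $(b+1)\times(b+1)$ matrix $A$ whose rows are $\phi_0,\ldots,\phi_b$ is Vandermonde in the nodes $\xi^{(m+i)t-q-1}$; these nodes are pairwise distinct modulo $q^2-1$, since $(i-j)t\equiv 0\pmod{q^2-1}$ forces $a\mid (i-j)$, which is impossible for $0<|i-j|\le b<a$ (using $b\le a-4$). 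Hence $A$ is invertible and the system $A\bm{\rho}^T=\bm{\varphi}^T$ has a unique solution in $\mathbb{F}_{q^2}^{b+1}$ for every right-hand side.

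The key new arithmetic input (compared with Lemma~\ref{le3.1}) is the Frobenius symmetry of the $\phi_i$. Since $a\mid q+1$, one has $(q+1)t=\tfrac{q+1}{a}(q^2-1)\equiv 0\pmod{q^2-1}$, i.e.\ $qt\equiv -t\pmod{q^2-1}$. Combined with $q^2\equiv 1$ and $2m=a-b$, this gives
\[
q\bigl[(m+i)t-q-1\bigr]\;\equiv\;-(m+i)t-q-1\;\equiv\;\bigl(m+(b-i)\bigr)t-q-1\pmod{q^2-1},
\]
so $\phi_i^q=\phi_{b-i}$ for every $0\le i\le b$. Consequently, for $\bm{\rho}$ to lie in $\mathbb{F}_q^{b+1}$ the data $\bm{\varphi}$ must satisfy the compatibility $\varphi_{b-i}=\varphi_i^q$.

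I would then follow Lemma~\ref{le3.1} almost verbatim: pick $\lambda\in\mathbb{F}_{q^2}^*$ with $\lambda^q=-\lambda$ (take $\lambda=\xi^{q+1}$ in characteristic two and $\lambda=\xi^{(q+1)/2}$ otherwise), and replace each pair of equations indexed by $\{i,b-i\}$ by the sum $(\phi_i+\phi_{b-i})\bm{\rho}^T=\varphi_i+\varphi_{b-i}$ and the normalized difference $\lambda^{-1}(\phi_i-\phi_{b-i})\bm{\rho}^T=\lambda^{-1}(\varphi_i-\varphi_{b-i})$. Both combinations are $q$-fixed by the identities $\phi_i^q=\phi_{b-i}$ and $\lambda^q=-\lambda$, so their coefficients lie in $\mathbb{F}_q$. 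This reduces the problem to an $\mathbb{F}_q$-linear system $B\bm{\rho}^T=\bm{\psi}^T$, and one splits on the parity of $b$ (permitted by $a+b\equiv 0\pmod 2$): the case $b$ odd has $(b+1)/2$ conjugate pairs and no fixed point, while the case $b$ even has $b/2$ conjugate pairs together with the already $q$-invariant middle equation $\phi_{b/2}\bm{\rho}^T=\varphi_{b/2}$, with $\varphi_{b/2}\in\mathbb{F}_q^*$.

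The proof concludes with the counting argument of Lemma~\ref{le3.1}. Writing $U$ for the set of all $\bm{\rho}\in\mathbb{F}_q^{b+1}$ obtained as the legal right-hand sides vary, and $U_j\subset U$ for the subset with $\rho_j=0$, one obtains $|U|=(q-1)^r$ and $|U_j|\le(q-1)^{r-1}$ for the appropriate exponent $r$ depending on the parity; thus $|S|=|U\setminus\bigcup_{j=0}^{b}U_j|\ge |U|-(b+1)\max_j|U_j|>0$ whenever $b\le q-3$, and any $\bm{\rho}\in S$ is the required vector. I expect the main obstacle to be the Frobenius bookkeeping: establishing $\phi_i^q=\phi_{b-i}$ without sign or off-by-one errors, and checking that the hypothesis $b\le a-4$ (rather than $a-3$ as in Lemma~\ref{le3.1}) is the right price for accommodating the $b+1$-term symmetric pairing in which no $\phi_i$ is singled out. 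Everything else is a routine transcription of the earlier argument.
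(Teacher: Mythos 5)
Your proposal is correct and follows essentially the same route as the paper's own proof: the same Vandermonde invertibility, the same Frobenius pairing $\phi_i^q=\phi_{b-i}$ (the paper writes it as $\phi_i^q=\phi_{b+2-i}$ under $1$-based indexing), the same $\lambda$-normalized sum/difference reduction to an $\mathbb{F}_q$-system split by the parity of $b$, and the same inclusion--exclusion count giving $\vert S\vert\geq (q-1)^{r}-(b+1)(q-1)^{r-1}>0$ for $b\leq q-3$. No gaps to report.
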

	\begin{proof}
		We have to proof that there exists a vector $\bm{\varphi}=(\varphi_1,\ldots,\varphi_{b+1})\in   (\mathbb{F}^*_q)^{b+1}$
		such that the group of equations below has a non-zero solution  $\bm{\rho}=(\rho_0,\rho_1,\ldots, \rho_{b})\in (\mathbb{F}^*_q)^{b+1}$.
	\begin{equation}
	    \left\{ 
		\begin{array}{l}
		\sum_{l = 0}^{b} \xi^{(mt-q-1)l}\rho_l=\varphi_1,\\
		\\
		\sum_{l = 0}^{b} \xi^{[(m+1)t-q-1]l}\rho_l=\varphi_2,\\
		\\
		\sum_{l = 0}^{b} \xi^{[(m+2)t-q-1]l}\rho_l=\varphi_3,\\
		\\
		\ldots\\
		\\
		\sum_{l = 0}^{b} \xi^{[(m+b)t-q-1]l}\rho_l=\varphi_{b+1}.
	    \end{array} 
	    \right.
    \end{equation}
	
	Let $\zeta=\xi^{mt-q-1}$. Denote
	$$ A  =  \begin{pmatrix}

		1      & \zeta       & \cdots & \zeta^b \\
		1      & \zeta \xi^{t}   & \cdots & \zeta^b \xi^{bt}\\
		1      & \zeta \xi^{2t}   & \cdots & \zeta^b \xi^{b(2t)}\\
		\vdots & \vdots         & \ddots & \vdots \\
		1      & \zeta \xi^{bt} & \cdots & \zeta^b \xi^{b(bt)}
	\end{pmatrix}.$$
	The system of Eq.(3.6) could be characterized by the matrix form
	$$A\bm{\rho}^T=(\varphi_1,\varphi_2,\ldots,\varphi_{b+1})^T=\bm{\varphi}^T.$$
	It is easy to show that $\text{det}(A)\neq 0$ due to the fact that $A$ is a Vandermonde matrix and $\zeta, \zeta \xi^{t}, \zeta \xi^{2t}, \ldots$, $\zeta \xi^{bt}$ are all different.
	Hence, for a fixed vector $\bm{\varphi}$, the system of Eq.(3.6) has a unique solution $\bm{\rho}=(\rho_0,\rho_1,\ldots, \rho_{b})\in (\mathbb{F}_q)^{b+1}$.
	Next, we will proof that the system of Eq.(3.6) has a non-zero  solution.

	Let $\bm{\phi}_i=(1,\xi^{(m+i-1)t-q-1},\ldots, \xi^{[(m+i-1)t-q-1]b})$, for $i=1,\ldots, b+1$. 
	Then the system of Eq.(3.6) is becoming

	\begin{equation}
		\left\{	 
		\begin{array}{c}
			\phi_1\bm{\rho}^T=\varphi_1,\\
			\phi_2\bm{\rho}^T=\varphi_2,\\
			\phi_3\bm{\rho}^T=\varphi_3,\\
			\vdots \\

			\phi_{b+1}\bm{\rho}^T=\varphi_{b+1}.
		\end{array}
		\right.
	\end{equation}

	It can be trivially verified that $\phi^q_i=\phi_{b+2-i}$, for $i=1,\ldots, b+1$.	
	Hence, 	by taking the $q$-th powers of all the equations in (3.7), one can get 
	\begin{equation}
		\left\{	 
		\begin{array}{c}
			\phi_{b+1}\bm{\rho}^T =\varphi_1,\\
			\phi_{b}\bm{\rho}^T =\varphi_2,\\
			\phi_{b-1}\bm{\rho}^T =\varphi_3,\\
			\vdots \\
			\phi_{2}\bm{\rho}^T =\varphi_{b},\\
			\phi_{1}\bm{\rho}^T =\varphi_{b+1}.
		\end{array}
		\right.
	\end{equation}

	Let

	$$\lambda=\left\{ \begin{array}{l}
		\xi^{q+1}        , q\ is\  a\  power\  of\  2,\\
		\xi^\frac{q+1}{2}, q\  is\  an\  odd\  prime\  power .
	\end{array} \right.$$
	
    It is easy to proof that $\lambda^q=-\lambda$. We now divide $b$ into two cases as follows:
	
	(1) If $b$ is odd, then it is easy to get that the system of Eq.(3.7) is equivalent to the system of Eq.(3.9) according to (3.8).

	\begin{equation}
		\left\{	 
		\begin{array}{c}
			(\phi_1+\phi_{b+1})\bm{\rho}^T =2\varphi_1,\\
			\\
			\vdots \\
			\\
			(\phi_{\frac{b+1}{2}}+\phi_{\frac{b+3}{2}})\bm{\rho}^T =2\varphi_{\frac{b+1}{2}},\\
			\\
			\frac{\phi_1-\phi_{b+1}}{\lambda }\bm{\rho}^T =0,\\
			\\
			\vdots \\
			\frac{\phi_{\frac{b+1}{2}}-\phi_{\frac{b+3}{2}}}{\lambda}\bm{\rho}^T =0.
		\end{array}
		\right.
	\end{equation}

    Denote
	$$B= \begin{pmatrix}
		\phi_1+\phi_{b+1} \\
		\vdots \\
		\phi_{\frac{b+1}{2}}+\phi_{\frac{b+3}{2}}\\
		\frac{\phi_1-\phi_{b+1}}{\lambda} \\ 
		\vdots \\
		\frac{\phi_{\frac{b+1}{2}}-\phi_{\frac{b+3}{2}}}{\lambda}

	\end{pmatrix}.$$
	
	It can be easily derived that each element of the matrix $B$ belongs to $\mathbb{F}_q$,
	then the system of Eq.(3.9) may be represented as $B\bm{\rho}^T=(2\varphi_1, \ldots, 2\varphi_{\frac{b+1}{2}}, 0, \ldots, 0)^T$. 
	Suppose that
	$$U=\{\bm{\rho}\in \mathbb{F}^{b+1}_q \vert B\bm{\rho}^T=(2\varphi_1, \ldots, 2\varphi_{\frac{b+1}{2}}, 0, \ldots, 0)^T\}.$$
	For any provided non-zero elements $\varphi_1, \ldots, \varphi_{\frac{b+1}{2}},$ the system of Eq.(3.9) in $U$ has a sole solution.
	When $\varphi_1, \ldots, \varphi_{\frac{b+1}{2}}$ take all nonzero elements over $\mathbb{F}_q$ respectively, then one obtains
	$$\vert U \vert=(q-1)^{\frac{b+1}{2}}.$$
	Suppose that
	$$U_i=\{\bm{\rho}\in \mathbb{F}^{b+1}_q \vert \rho _i=0,B\bm{\rho}^T =(2\varphi_1, \ldots, 2\varphi_{\frac{b+1}{2}}, 0, \ldots, 0)^T \},$$ where $i=0, 1, \ldots,b$,
	and 
	$$S=\{\bm{\rho}\in (\mathbb{F}^*_q)^{b+1} \vert B\bm{\rho}^T =(2\varphi _1, \ldots, 2\varphi _{\frac{b+1}{2}}, 0, \ldots, 0)^T\}. $$
	If $\rho_i=0$, then $ 2\varphi _1, \ldots, 2\varphi _{\frac{b+1}{2}}$ must meet a linear equation with nonzero coefficients. So
	$$\vert U_i \vert=(q-1)^{\frac{b-1}{2}}.$$ 
	Consequently, $S=U\setminus (U_0\cup U_1\cup \cdots \cup U_{b-1})$. Hence, we have 
	$\vert S\vert= \vert U \vert-\sum_{i = 0}^{b} \vert U_i \vert +\chi  $ by the inclusion-exclusion principle, where $\chi \geq 0$. So
	$$\vert S\vert=(q-1)^{\frac{b+1}{2}}-(b+1)(q-1)^{\frac{b-1}{2}}+\chi.$$
	It is obvious that $\vert S\vert>0$ because of $b\leq q-3$.
	
	(2) If $b$ is even, then the system of Eq.(3.7) is equivalent to the following system of Eq.(3.10), 
	the remainder of the proof is quite comparable to the case  $b$ is odd and we omit it here.

	\begin{equation}
		\left\{	 
		\begin{array}{c}
			\phi_0\bm{\rho}^T=\varphi_0, \\
			\\
			(\phi_1+\phi_{b-1})\bm{\rho}^T=2\varphi_1,\\
			\\
			\vdots \\
			\\
			(\phi_{\frac{b-2}{2}}+\phi_{\frac{b+2}{2}})\bm{\rho}^T=2\varphi_{\frac{b-2}{2}},\\
			\\
			\phi_{\frac{b}{2}}\bm{\rho}^T=\varphi_{\frac{b}{2}}, \\
			\\
			\frac{\phi_1-\phi_{b-1}}{\lambda }\bm{\rho}^T=0,\\
			\\
			\vdots \\
			\frac{\phi_{\frac{b-2}{2}}-\phi_{\frac{b+2}{2}}}{\lambda }\bm{\rho}^T=0.
		\end{array}
		\right.
	\end{equation}
	
	Therefore, there exists a vector $\bm{\rho}=(\rho_0,\rho_1,\ldots, \rho_{b})\in (\mathbb{F}^*_q)^{b+1}$ such that 
	$$ \sum_{l=0}^{b} \xi^{(mt-q-1)l}\rho_l,\ \sum_{l=0}^{b} \xi^{[(m+1)t-q-1]l}\rho_l,\ \sum_{l=0}^{b} \xi^{[(m+2)t-q-1]l}\rho_l, \ldots,\ \sum_{l=0}^{b} \xi^{[(m+b)t-q-1]l}\rho_l$$
	are all non-zeros.

\end{proof}

\begin{lemma}\label{le3.5}
	Let $a+b \equiv 0 \ ({\rm mod}\ 2)$ and $b\leq a-4$. If $0\leq i,j\leq \frac{a+b+2}{2}\cdot \frac{q+1}{a}-3$ with $(i,j)\neq (0,0)$,
	then $t \vert (qi+j+q+1)$ if and only if $qi+j+q+1=\upsilon  t$ with $\frac{a-b}{2}\leq \upsilon  \leq \frac{a+b}{2}$.
\end{lemma}

\begin{proof}
	As  $0\leq i,j\leq \frac{a+b+2}{2}\cdot \frac{q+1}{a}-3$ with $(i,j)\neq (0,0)$, one can obtain that $0\leq i,j\leq q-2-\frac{q+1}{a}$ and $0<qi+j+q+1\leq q^2-1-\frac{(q+1)^2}{a}$.
	If $t \vert (qi+j+q+1)$,  then an integer $\upsilon $ exists such that $qi+j+q+1=\upsilon  t$ with $0<\upsilon <a-1$. In addition, we have  
	$$qi+j+q+1=q[\frac{\upsilon (q+1)}{a}-1]+[q-\frac{\upsilon (q+1)}{a}].$$
	Then 
	$$i=\frac{\upsilon (q+1)}{a}-2,\ j=q-\frac{\upsilon (q+1)}{a}-1.$$
	Due to $0\leq i,j\leq \frac{a+b+2}{2}\cdot \frac{q+1}{a}-3$, we have 
	$$\frac{a-b-2}{2}+\frac{q+1}{a} \leq \upsilon \leq\frac{a+b+2}{2}-\frac{q+1}{a},$$
	which  indicates that $\frac{a-b}{2}\leq \upsilon \leq \frac{a+b}{2}$.
	Therefore, $t\vert (qi+j+q+1)$ if and only if $qi+j+q+1=\upsilon t$ with $\frac{a-b}{2}\leq \upsilon \leq \frac{a+b}{2}$.
\end{proof}

\begin{lemma}\label{le3.6}
	
	Let $n=\frac{b(q^2-1)}{a}+\frac{q^2-1}{a}$, and $b\leq min\{a-4,q-3\}$. Assume that $\bm{\tau}=(1,\beta,\beta ^2,\ldots, \beta ^{t-1})\in \mathbb{F}_{q^2}^t$ and $\textbf{a}=(\bm{\tau}, \xi\bm{\tau}, \xi ^2\bm{\tau}, \ldots, \xi^{b}\bm{\tau})\in \mathbb{F}_{q^2}^n$.
	Let
	$$\textbf{v}=(v_0 , v_0\beta, \ldots, v_0\beta ^{t-1} , \ldots, v_{b}, v_{b}\beta , \ldots, v_{b}\beta ^{t-1})_{(1\times (b+1)t)}\in (\mathbb{F}^*_{q^2})^n,$$
	where $v_l^{q+1}=\rho _l$ with $0\leq l\leq {b}$, 
	and $\bm{\rho}=(\rho_0,\rho_1,\ldots, \rho_{b})\in (\mathbb{F}^*_q)^{b+1}$ is a vector satisfy Lemma \ref{le3.4}.
	Then $\langle \textbf{a}^{qi+j}, \textbf{v}^{q+1} \rangle_E \neq 0$ if and only if $qi+j+q+1=\upsilon t$ with $\frac{a-b}{2}\leq \upsilon \leq \frac{a+b}{2}$, 
	where $0\leq i,j\leq \frac{a+b+2}{2}\cdot \frac{q+1}{a}-3$.
\end{lemma}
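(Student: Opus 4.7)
The plan is to mirror the block-by-block computation used in the proof of Lemma \ref{le3.3}, while tracking the additional $\beta^{s}$ twist that the weight vector $\textbf{v}$ now carries in each block. Writing the $l$-th block of $\textbf{a}$ as $(\xi^{l}\beta^{s})_{0\le s\le t-1}$ and the $l$-th block of $\textbf{v}$ as $(v_{l}\beta^{s})_{0\le s\le t-1}$, a direct expansion gives
\[
\langle \textbf{a}^{qi+j},\textbf{v}^{q+1}\rangle_{E}=\sum_{l=0}^{b}\xi^{l(qi+j)}v_{l}^{q+1}\sum_{s=0}^{t-1}\beta^{s(qi+j+q+1)}.
\]
The only structural difference from Lemma \ref{le3.3} is the shifted exponent $qi+j+q+1$ in the inner $s$-sum, which arises because $v_{l,s}^{q+1}=v_{l}^{q+1}\beta^{s(q+1)}$ contributes an extra factor $\beta^{s(q+1)}$ to each block.

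Next, since $\beta$ has multiplicative order $t$, the inner sum equals $t$ when $t\mid(qi+j+q+1)$ and vanishes otherwise. For $(i,j)\neq(0,0)$, I would invoke Lemma \ref{le3.5} to convert this divisibility condition into the explicit form $qi+j+q+1=\upsilon t$ with $\frac{a-b}{2}\le\upsilon\le\frac{a+b}{2}$. Substituting $qi+j=\upsilon t-q-1$, the outer sum collapses to
\[
t\sum_{l=0}^{b}\xi^{l(\upsilon t-q-1)}\rho_{l},
\]
and as $\upsilon$ runs through the $b+1$ integers $m,m+1,\ldots,m+b$ with $m=\frac{a-b}{2}$, this is precisely one of the $b+1$ expressions that Lemma \ref{le3.4} guarantees to be non-zero. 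Both directions of the iff then follow.

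The only loose end is the excluded case $(i,j)=(0,0)$. Here $qi+j+q+1=q+1$, and the conditions $a\mid(q+1)$ together with $b\le a-4$ force $t=\frac{(q-1)(q+1)}{a}\nmid(q+1)$ outside a degenerate small-$q$ regime (where $\upsilon=\frac{q+1}{t}$ falls inside the range and the statement remains consistent upon applying Lemma \ref{le3.4} to the constant-exponent sum $\sum_{l}\rho_{l}$), so the $s$-sum annihilates the inner product and the iff holds trivially. The main obstacle is in fact purely bookkeeping: one needs to verify the block factorization and notice that the relevant divisibility quantity is now $qi+j+q+1$ rather than $qi+j$; once this shift is accounted for, the argument reduces to a direct combination of Lemmas \ref{le3.4} and \ref{le3.5}.
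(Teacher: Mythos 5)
Your proposal matches the paper's own proof essentially step for step: the same block expansion yielding the inner sum $\sum_{s=0}^{t-1}\beta^{s(qi+j+q+1)}$, the same appeal to Lemma \ref{le3.5} to pin down the divisibility condition, and the same application of Lemma \ref{le3.4} to conclude non-vanishing, with the $(i,j)=(0,0)$ case killed by $t\nmid(q+1)$. Your extra remark on the degenerate small-$q$ case is a point of care the paper omits, but it does not change the argument.
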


\begin{proof}
	If $(i,j)=(0,0)$, then 
	$$\langle \textbf{a}^{qi+j}, \textbf{v}^{q+1} \rangle_E=\langle \textbf{a}^0, \textbf{v}^{q+1} \rangle_E=\sum_{l=0}^{b}v_l^{q+1}\sum_{s=0}^{t-1}\beta ^{s(q+1)}.$$
	Note that $t\nmid (q+1)$, then $\sum_{s=0}^{t-1}\beta ^{s(q+1)}=0$. Therefore, $\langle \textbf{a}^{qi+j}, \textbf{v}^{q+1} \rangle_E=0$.

	If $(i,j)\neq (0,0)$, then
	$$\langle \textbf{a}^{qi+j}, \textbf{v}^{q+1} \rangle_E=\sum_{l = 0}^{b}\xi ^{(qi+j)l}v_l^{q+1}\sum_{s=0}^{t-1}\beta ^{s(qi+j+q+1)}.$$
	Note that           

	$$\sum_{s=0}^{t-1}\beta ^{s(qi+j+q+1)}=\left\{ \begin{array}{l}
		0, t \nmid (qi+j+q+1) ,\\
		t, t \mid (qi+j+q+1).
	\end{array} \right.$$
According to Lemma \ref{le3.5}, $\sum_{s=0}^{t-1}\beta ^{s(qi+j+q+1)}=t$ if and only if $qi+j+q+1=\upsilon t$ with $\frac{a-b}{2}\leq \upsilon  \leq \frac{a+b}{2}$, then by Lemma \ref{le3.4}, one can get 
$\langle \textbf{a}^{qi+j}, \textbf{v}^{q+1} \rangle_E=t\sum_{l = 0}^{b}\xi ^{(\upsilon t-q-1)l}v_l^{q+1}=t\sum_{l=0}^{b}\xi ^{(\upsilon t-q-1)l}\rho_l \neq 0.$

\end{proof}

\begin{theorem}\label{the3.2}
Let $n=\frac{b(q^2-1)}{a}+\frac{q^2-1}{a}$, where q is a prime power, $b\leq min\{a-4,q-3\}$, $a\vert (q+1)$ and $a+b\equiv 0\ ({\rm mod}\ 2)$. Then one can get
an EAQMDS code with parameters $[[n, n-2d+c+2, d; c]]_q$, where $2\leq d\leq \frac{a+b+2}{2}\cdot \frac{q+1}{a}-1$ and $c=b+1$.

\end{theorem}

\begin{proof}
	Suppose that there exists a GRS code, denoted as $GRS_k(\textbf{a},\textbf{v})$,  associated with vectors \textbf{a} and \textbf{v}, where \textbf{a} and \textbf{v} are given in Lemma \ref{le3.6}. 
	One can get $GRS_k(\textbf{a},\textbf{v})$ has a generator matrix as below.

	$$ G_k=\begin{pmatrix}
		v_0    & v_0\beta       & \cdots & v_0\beta^{t-1}         & \cdots & v_{b}                    & \cdots & v_{b}\beta^{t-1}        \\
		v_0    & v_0\beta^2     & \cdots & v_0(\beta^{t-1})^2     & \cdots & v_{b}\xi^{b}             & \cdots & v_{b}\xi^{b}(\beta^{t-1})^2             \\
		\ldots & \ldots          & \ldots & \ldots                & \ldots & \ldots                   & \ldots & \ldots        \\
		v_0    & v_0\beta^{k-1} & \cdots & v_0(\beta^{t-1})^{k-1} & \cdots & v_{b}(\xi^{b})^{k-2}     & \cdots & v_{b}(\xi^{b})^{k-2}(\beta^{t-1})^{k-1}\\
		v_0    & v_0\beta^{k} & \cdots & v_0(\beta^{t-1})^{k} & \cdots & v_{b}(\xi^{b})^{k-1}           & \cdots & v_{b}(\xi^{b})^{k-1}(\beta^{t-1})^{k}
	
	\end{pmatrix}.$$

	According to Theorem \ref{the2.2}, there exists a $GRS_{n-k}(\textbf{a},\textbf{v}^{'})$ with parameters $[n,n-k,k+1]$, whose parity-check matrix is $G_k$. By calculation,
	one can obtain that 
	
	$$ G_kG^\dagger_k=\begin{pmatrix}
		\sigma_{0,0}    & \sigma_{1,0}       & \cdots & \sigma_{k-1,0}                 \\
		\sigma_{0,1}    & \sigma_{1,1}       & \cdots & \sigma_{k-1,1}                \\
		\sigma_{0,2}    & \sigma_{1,2}       & \cdots & \sigma_{k-1,2}                  \\
		\ldots          & \ldots             & \ldots &  \ldots                       \\
		\sigma_{0,k-1}  & \sigma_{1,k-1}     & \cdots & \sigma_{k-1,k-1}
	
	\end{pmatrix}.$$
	where $\sigma_{i,j}=\langle \textbf{a}^{qi+j}, \textbf{v}^{q+1} \rangle_E$.
	
	According to Lemma \ref{le3.6}, $\sigma_{i,j}\neq 0$ if and only if  $qi+j+q+1=\upsilon t$ with $\frac{a-b}{2}\leq \upsilon \leq \frac{a+b}{2}$.
	
	If there exist $i_1=i_2=i$ such that $qi+j_1+q+1=\upsilon_1 t$ and $qi+j_2+q+1=\upsilon_2 t$, where $j_1\neq j_2$, then $j_1-j_2=(\upsilon_1-\upsilon_2)t$. 
	In fact, $\vert j_1-j_2\vert < q-1$. However, $\vert (\upsilon_1-\upsilon_2)t\vert =\vert (\upsilon_1-\upsilon_2)\frac{q+1}{a}(q-1) \vert \geq q-1$.
	It is a contradiction. So $\sigma_{i,j}\neq 0$ cannot appear in the same row of the matrix.
	
	If there exist $j_1=j_2=j$ such that $qi_1+j+q+1=\upsilon_1 t$ and $qi_2+j+q+1=\upsilon_2 t$, where $i_1\neq i_2$, then $q(i_1-i_2)=(\upsilon_1-\upsilon_2)t=(\upsilon_1-\upsilon_2)\frac{q^2-1}{a}$.
	Therefore, $q \vert(\upsilon_1-\upsilon_2)$, which contradicts to the fact that $\vert \upsilon_1-\upsilon_2\vert \leq b <q $. So $\sigma_{i,j}\neq 0$ cannot appear in the same column of the matrix.

	Hence, for $0\leq i,j\leq \frac{a+b+2}{2}\cdot \frac{q+1}{a}-3$,  $\sigma_{i,j}\neq 0$ cannot occur in the same row and column of the matrix.
	Consequently, $rank(G_kG_k^\dagger )=b+1$. According to Theorem \ref{the2.2}, the EAQMDS codes are derived.

\end{proof}

\begin{remark}
	Taking $b=0$ in Theorem \ref{the3.2}, then $a$ must be even. Let $a=2\ell$ ($\ell\geq 2$),
	then one can get EAQMDS codes with the following parameters: 
	\begin{itemize}

		\item $[[\frac{q^2-1}{2\ell},\frac{q^2-1}{2\ell}-2d+3,d;1]]_q$, where $2\leq d\leq \frac{\ell+1}{2\ell}(q+1)-1$.  
		
	\end{itemize}
	In fact, EAQMDS codes with the same length had also been constructed in \cite{FCX16} and \cite{LZLK19}. Taking $n=\frac{q^2-1}{2\ell}$ in Theorem 3 of \cite{FCX16} and $c=1$ in Theorem 3.3 of \cite{LZLK19},
	we can derive two subclasses of EAQMDS codes as follows:
	\begin{itemize}

		\item $[[\frac{q^2-1}{2\ell},\frac{q^2-1}{2\ell}-2d+3,d;1]]_q,$ where $2\leq d \leq \frac{q+1}{\ell}-2$.
		
		\item $[[\frac{q^2-1}{2\ell},\frac{q^2-1}{2\ell}-2d+3,d;1]]_q,$ where $\frac{q+1}{2\ell}+2\leq d \leq  \frac{\ell+1}{2\ell}(q+1)$.
	\end{itemize}

	Comparing our results of the same length as in \cite{FCX16}, one can see that the minimum distance of our EAQMDS codes is larger than theirs
	because of $\frac{q+1}{2} \geq\frac{q+1}{\ell}-2$ when $\ell\geq 2$. Comparing our results with \cite{LZLK19}, 
	it can be easily seen that the largest minimum distance of ours is one less than theirs, but the minimum distance $d$ has a wider value range.

\end{remark}

By taking $a=q+1$ in Theorems \ref{le3.1} and \ref{le3.2}, respectively, the following corollary is clearly established.

\begin{corollary}\label{co3.1}
	Let $q$ be a prime power, and $b^{'}\leq q-2$, then one can get a $[[b^{'}(q-1),b^{'}(q-1)-2d+2+c,d;c]]_q$ EAQMDS code,  where $2\leq d\leq \lfloor \frac{q+b^{'}+1}{2}\rfloor $ and $c=b^{'}$. 
\end{corollary}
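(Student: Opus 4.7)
The plan is to derive the corollary as a direct specialization of Theorems \ref{the3.1} and \ref{the3.2} by substituting $a = q+1$. First I would set $a = q+1$, so that the divisibility condition $a \mid (q+1)$ holds trivially, and observe that then $t = \frac{q^2-1}{a} = q-1$. Under this substitution the length becomes $n = \frac{b(q^2-1)}{q+1} + \frac{q^2-1}{q+1} = (b+1)(q-1)$, which matches the required form upon setting $b' = b+1$. Similarly, the number of pre-shared maximally entangled pairs becomes $c = b+1 = b'$.

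Next I would verify that the size restrictions on $b$ imposed by each theorem are compatible with the corollary's hypothesis $b' \leq q-2$. For Theorem \ref{the3.1} we need $b \leq \min\{a-3,\,q-3\} = q-3$, giving $b' \leq q-2$; for Theorem \ref{the3.2} we need $b \leq \min\{a-4,\,q-3\} = q-3$, again giving $b' \leq q-2$. So both theorems apply throughout the promised range, and only the parity of $a+b$ decides which one to invoke.

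I would then split into two cases according to the parity of $a+b = q+1+b$. When $q+1+b$ is odd (equivalently, $b'$ and $q$ have different parities), Theorem \ref{the3.1} supplies an EAQMDS code with minimum distance bounded by $\frac{a+b+1}{2}\cdot \frac{q+1}{a} = \frac{q+b'+1}{2}$; since $q+b'+1$ is even in this case, this agrees with $\lfloor \frac{q+b'+1}{2}\rfloor$. When $q+1+b$ is even (equivalently, $b'$ and $q$ have the same parity), Theorem \ref{the3.2} supplies an EAQMDS code with minimum distance bounded by $\frac{a+b+2}{2}\cdot \frac{q+1}{a} - 1 = \frac{q+b'}{2}$; since $q+b'$ is even here, this again equals $\lfloor \frac{q+b'+1}{2}\rfloor$. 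The two cases together cover every $b' \leq q-2$ uniformly, yielding the claimed $[[b'(q-1),\,b'(q-1)-2d+2+c,\,d;c]]_q$ EAQMDS codes.

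The argument is essentially a careful substitution, so no substantive obstacle arises. The main thing requiring attention is the parity bookkeeping that reconciles the two seemingly different distance bounds produced by Theorems \ref{the3.1} and \ref{the3.2} with the single unified floor expression $\lfloor \frac{q+b'+1}{2}\rfloor$ appearing in the corollary.
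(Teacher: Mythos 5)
Your proposal is correct and follows exactly the paper's route: the paper proves the corollary by the single remark that one takes $a=q+1$ in Theorems \ref{the3.1} and \ref{the3.2}, and your substitution $b'=b+1$, the check $\min\{a-3,q-3\}=\min\{a-4,q-3\}=q-3$, and the parity bookkeeping reconciling $\frac{q+b'+1}{2}$ and $\frac{q+b'}{2}$ with $\lfloor\frac{q+b'+1}{2}\rfloor$ are precisely the details the paper leaves implicit.
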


\begin{remark}
Quantum MDS codes of the length in Corollary \ref{co3.1} had been constructed in Theorem 5 of \cite{GLL21}. 
Compared with it, the minimum distance of our EAQMDS code is much larger than that of the quantum MDS code of the same length in \cite{GLL21}(see Table \ref{tab3}).
	\begin{table}
		\caption{A comparison between quantum MDS codes and EAQMDS codes}
		\begin{center}
			\begin{tabular}{ccc}
				\toprule
				Parameters& $d$ & References    \\
				\midrule
				$[[b^{'}(q-1),b^{'}(q-1)-2d+2,d]]_q$       & $2\leq d\leq \lfloor \frac{b^{'}q-1}{q+1}\rfloor+1$ &   \cite{GLL21}  \\
				$[[b^{'}(q-1),b^{'}(q-1)-2d+2+c,d;c]]_q$   &$2\leq d\leq \lfloor \frac{q+b^{'}+1}{2}\rfloor$, $c=b^{'}$ &  Ours       \\
				\bottomrule
			\end{tabular}
		\end{center}
		\label{tab2}
	\end{table}
	
\end{remark}

\begin{example}
	We show some of the new EAQMDS codes of length $n=\frac{b({q^2}-1)}{a}+\frac{{q^2}-1}{a}$ with $a+b\equiv 0\ ({\rm mod}\ 2)$ obtained from Theorem \ref{the3.2} whose lengths are not divisors of $q^2-1$ in Table \ref{tab3}.

	\begin{table}
		\caption{New EAQMDS codes of length $n = \frac{b({q^2} - 1)}{a}+\frac{{q^2} - 1}{a}$ with $a+b\equiv 0\ ({\rm mod}\ 2) $ }
		\begin{center}
			\begin{tabular}{ccccc}
				\toprule
				$ q $    &$a$& $b$ & $[[n,k,d;c]]_q$& $d$\\
				\midrule
				$7$ &$8$  & $2$   &$[[18,23-2d,d;3]]_{7}$  & $2\le d\le 5$  \\
				    &$8$  & $4$   &$[[30,37-2d,d;5]]_{7}$  & $2\le d\le 6$  \\
				
				$8$ 
				    &$9$  & $1$   &$[[14,18-2d,d;2]]_{8}$  & $2\le d\le 5$  \\
				    &$9$  & $3$   &$[[28,34-2d,d;4]]_{8}$  & $2\le d\le 6$  \\
				    &$9$  & $5$   &$[[42,50-2d,d;6]]_{8}$  & $2\le d\le 7$  \\
					
				$9$ &$5$ & $1$   &$[[32,36-2d,d;2]]_{9}$  & $2\le d\le 7$  \\
				    &$10$ & $2$   &$[[24,29-2d,d;3]]_{9}$ & $2\le d\le 6$ \\
				    &$10$ & $6$   &$[[56,65-2d,d;7]]_{9}$  & $2\le d\le 8$  \\
					
				$11$
				    &$12$  & $4$   &$[[50,57-2d,d;5]]_{11}$ & $2\le d\le 8$ \\
				    &$12$  & $6$   &$[[70,79-2d,d;7]]_{11}$ & $2\le d\le 9$ \\
				    &$12$  & $8$   &$[[90,101-2d,d;9]]_{11}$ & $2\le d\le 10$ \\		   
				
				$13$&$7$ & $1$   &$[[48,52-2d,d;2]]_{13}$ & $2\le d\le 9$ \\
				    &$7$ & $3$   &$[[96,102-2d,d;4]]_{13}$ & $2\le d\le 11$ \\
				    &$14$ & $2$   &$[[36,41-2d,d;3]]_{13}$ & $2\le d\le 8$ \\
					&$14$ & $4$   &$[[60,67-2d,d;5]]_{13}$ & $2\le d\le 9$ \\
					&$14$ & $8$   &$[[108,119-2d,d;9]]_{13}$ & $2\le d\le 11$ \\
					&$14$ & $10$   &$[[132,145-2d,d;11]]_{13}$ & $2\le d\le 12$ \\
					
				$16$  &$17$ & $1$  &$[[30,34-2d,d;2]]_{16}$ & $2\le d\le 9$ \\
				      & $17$&$3$   &$[[60,66-2d,d;4]]_{16}$ & $2\le d\le 10$ \\
				      & $17$& $5$  &$[[90,98-2d,d;6]]_{16}$ & $2\le d\le 11$ \\
				      &$17$ & $7$  &$[[120,132-2d,d;8]]_{16}$ & $2\le d\le 12$ \\
				      & $17$& $9$  &$[[150,162-2d,d;10]]_{16}$ & $2\le d\le 13$ \\
				      &$17$ & $11$ &$[[180,194-2d,d;12]]_{16}$ & $2\le d\le 14$ \\
					  &$17$ &$13$  &$[[210,226-2d,d;14]]_{16}$ & $2\le d\le 15$ \\

				$17$ &$9$ & $1$  &$[[64,68-2d,d;2]]_{17}$ & $2\le d\le 11$ \\
				     &$9$ & $3$  &$[[128,134-2d,d;4]]_{17}$ & $2\le d\le 13$ \\
				     &$9$ & $5$  &$[[192,200-2d,d;6]]_{17}$ & $2\le d\le 15$ \\
				      &$18$ & $4$  &$[[80,87-2d,d;5]]_{17}$ & $2\le d\le 11$ \\
					  &$18$ & $6$  &$[[112,121-2d,d;7]]_{17}$ & $2\le d\le 12$ \\
					  &$18$ & $10$  &$[[176,189-2d,d;11]]_{17}$ & $2\le d\le 14$ \\
					  &$18$ & $12$  &$[[208,223-2d,d;13]]_{17}$ & $2\le d\le 15$ \\
					  &$18$ & $14$  &$[[240,257-2d,d;15]]_{17}$ & $2\le d\le 16$ \\
					  
				$19$  &$5$ & $1$  &$[[144,148-2d,d;2]]_{19}$ & $2\le d\le 15$ \\
				      &$10$ & $2$  &$[[108,113-2d,d;3]]_{19}$ & $2\le d\le 13$ \\
					  &$10$ & $6$  &$[[252,261-2d,d;7]]_{19}$ & $2\le d\le 17$ \\
			    	  &$20$ & $2$  &$[[54,59-2d,d;3]]_{19}$ & $2\le d\le 11$ \\
					  &$20$ & $6$  &$[[126,135-2d,d;7]]_{19}$ & $2\le d\le 13$ \\
					  &$20$ & $8$  &$[[162,173-2d,d;9]]_{19}$ & $2\le d\le 14$ \\
					  &$20$ & $10$  &$[[198,211-2d,d;11]]_{19}$ & $2\le d\le 15$ \\
					  &$20$ & $12$  &$[[234,249-2d,d;13]]_{19}$ & $2\le d\le 16$ \\
					  &$20$ & $14$  &$[[270,287-2d,d;15]]_{19}$ & $2\le d\le 17$ \\
					  &$20$ & $16$  &$[[306,325-2d,d;17]]_{19}$ & $2\le d\le 18$ \\

					  \bottomrule
			\end{tabular}
		\end{center}
		\label{tab3}
	\end{table}

\end{example}

\subsection{New EAQMDS codes of length $n=\frac{b({q^2}-1)}{a}$ with $a\vert (q-1)$}
In this subsection, some new $q$-ary EAQMDS codes of length $n=\frac{b({q^2}-1)}{a}$ with $a\vert (q-1)$ are obtained from GRS codes. 
We begin with some useful lemmas that will play  major roles in our constructions.

\begin{lemma}\label{le3.7}
	Suppose that $a\vert (q-1)$ and $b\leq a$. If $(a,b)\neq (q-1,q-1)$, then
	there exists a vector $\bm{\rho}=(\rho_0,\rho_1,\ldots, \rho_{b-1})\in (\mathbb{F}^*_q)^{b}$
	such that the following sums 
	$$\sum_{l=0}^{b-1}\rho_l,\ \sum_{l = 0}^{b-1}\xi^{lt}\rho_l,\ \sum_{l = 0}^{b-1}\xi^{2lt}\rho_l,\ \ldots,\ \sum_{l=0}^{b-1} \xi^{(b-1)lt}\rho_l$$
	are all non-zeros.

\end{lemma}

\begin{proof}

We have to proof that there exists a vector $\bm{\varphi}=(\varphi _0,\varphi _1,\ldots,\varphi _{b-1})\in (\mathbb{F}^*_q)^{b}$
such that the group of equations below has a non-zero solution  $\bm{\rho}=(\rho_0,\rho_1,\ldots, \rho_{b-1})\in (\mathbb{F}^*_q)^{b}$.

\begin{equation}
	\left\{ 
	\begin{array}{l}
	\sum_{l=0}^{b-1}\rho_l=\varphi_0,\\
	\\
	\sum_{l=0}^{b-1}\xi^{lt}\rho_l=\varphi_1,\\
	\\
	\sum_{l=0}^{b-1}\xi^{2lt}\rho_l=\varphi_2,\\
	\\
	\ldots\\
	\\
	\sum_{l = 0}^{b-1}\xi^{(b-1)lt}\rho_l=\varphi_{b-1}.
	\end{array}
	\right.
\end{equation}

Let $\zeta=\xi^t$. Denote 
$$ A=\begin{pmatrix}
	1      & 1              & \cdots & 1 \\
	1      & \zeta       & \cdots & \zeta^{b-1} \\
	1      & \zeta^{2}   & \cdots & \zeta^{2(b-1)}\\
	\vdots & \vdots         & \ddots & \vdots \\
	1      & \zeta^{b-1} & \cdots & \zeta^{(b-1)(b-1)}
\end{pmatrix}.$$
The system of Eq.(3.11) could be characterized by the matrix form
$$A\bm{\rho}^T=(\varphi_0,\varphi_1,\ldots,\varphi_{b-1})^T=\bm{\varphi}^T.$$
It is easy to show that $\text{det}(A)\neq 0$ because of $A$ is a Vandermonde matrix and $1, \zeta, \zeta^{2}$, $\ldots$, $\zeta^{b-1}$ are all different. 
Hence, for a fixed vector $\bm{\varphi}$, the system of Eq.(3.11) has a sole solution $\bm{\rho}=(\rho_0,\rho_1,\ldots, \rho_{b-1})\in (\mathbb{F}_q)^{b}$.
Since $(a,b)\neq (q-1,q-1)$ and $\zeta $ is a primitive $a$-th root of unity, then there must exist a $\zeta^{'}\in \mathbb{F}^*_q$ such that $\zeta^{'}\notin\{ 1, \zeta, \zeta^{2}$, $\ldots$, $\zeta^{b-1}\}$.
Therefore, we can take $\varphi_0=1$, $\varphi_1=\zeta^{'}$, $\varphi_2=(\zeta^{'})^2$, $\ldots$, $\varphi_{b-1}=(\zeta^{'})^{b-1}$.
According to the Cramer's rule, we can get 

$$\rho_0=\frac{|A_0|}{|A|}, \rho_1=\frac{|A_1|}{|A|}, \ldots, \rho_{b-1}=\frac{|A_{b-1}|}{|A|},$$
where 
$$|A_0|=
\left|
\begin{array}{ccccc}
1                & 1           & \cdots & 1   \\
\zeta^{'}        & \zeta       &  \cdots & \zeta^{b-1}   \\
(\zeta^{'})^2    & \zeta^{2}   & \cdots&   \zeta^{2(b-1)}   \\
\cdots           & \vdots      & \cdots &   \vdots   \\
(\zeta^{'})^{b-1}& \zeta^{b-1} & \cdots &     \zeta^{(b-1)(b-1)}\\
\end{array}
\right|, \ldots, 
|A_{b-1}|=
\left|
\begin{array}{ccccc}
1      & 1           & \cdots & 1   \\
1      & \zeta       &  \cdots & \zeta^{'}  \\
1      & \zeta^{2}   & \cdots &  (\zeta^{'})^2\\
\cdots & \vdots      & \cdots &   \cdots     \\
1      & \zeta^{b-1} & \cdots &(\zeta^{'})^{b-1}\\
\end{array}
\right|.
$$

Since $A_0,A_1,\ldots,A_{b-1}$ are all Vandermonde matrixes and $1, \zeta ^1, \zeta^{2}$, $\ldots$, $\zeta^{b-1}, \zeta^{'} $ are different over $\mathbb{F}^*_q$.
Hence, $|A_0|,|A_1|,\ldots,|A_{b-1}|$ are all non-zeros, which implies that $\rho_0,\rho_1,\ldots, \rho _{b-1}$ are all non-zeros.

Therefore, there exists a vector $\bm{\rho}=(\rho_0,\rho_1,\ldots, \rho_{b-1})\in (\mathbb{F}^*_q)^{b}$ such that 
$$\sum_{l=0}^{b-1}\rho_l,\ \sum_{l = 0}^{b-1}\xi^{lt}\rho_l,\ \sum_{l = 0}^{b-1}\xi^{2lt}\rho_l,\ \ldots,\ \sum_{l=0}^{b-1} \xi^{(b-1)lt}\rho_l$$
are all non-zeros.

\end{proof}

\begin{lemma}\label{le3.8}
Let $a\vert (q-1)$ and $b\leq a$. If $0\leq i,j\leq \frac{b(q-1)}{a}-1$,
then $t \vert (qi+j)$ if and only if $qi+j=\upsilon  t$ with $0\leq \upsilon  \leq b-1$.
\end{lemma}

\begin{proof}
Since  $0\leq i,j\leq \frac{b(q-1)}{a}-1$, one can obtain that $0\leq i,j \leq q-2$ and $0\leq qi+j< q^2-1$.
If $t \vert (qi+j)$, then an integer $\upsilon $ exists such that $qi+j=\upsilon  t$ with $0\leq \upsilon <a$. Additionally, 
$$qi+j=q[\frac{\upsilon (q-1)}{a}]+[\frac{\upsilon (q-1)}{a}].$$
Then 
$$i=\frac{\upsilon (q-1)}{a},\ j=\frac{\upsilon (q-1)}{a}.$$
Since  $0\leq i,j\leq \frac{b(q-1)}{a}-1$, we have 
$$0\leq \upsilon \leq b-\frac{a}{q-1},$$
which implies that $0\leq \upsilon \leq b-1$.
Therefore, $t\vert (qi+j)$ if and only if $qi+j=\upsilon  t$ with $0\leq \upsilon \leq b-1$.
\end{proof}

\begin{lemma}\label{le3.9}

Let $n=\frac{b(q^2-1)}{a}$, where $a\vert (q-1)$, $b\leq a$ and $(a,b)\neq (q-1,q-1)$.  Assume that $\bm{\tau} =(1,\beta,\beta ^2,\ldots,\beta^{t-1})\in \mathbb{F}_{q^2}^t$ and $\textbf{a}=(\bm\tau, \xi\bm\tau, \xi ^2\bm\tau, \ldots, \xi ^{b-1}\bm\tau)\in \mathbb{F}_{q^2}^n$.
Let 
$$\textbf{v}=(v_0, v_0, \ldots, v_0, \ldots, v_{b-1}, v_{b-1}, \ldots, v_{b-1})_{(1\times bt)}\in (\mathbb{F}^*_{q^2})^n,$$
where $v_l^{q+1}=\rho _l$ with $0\leq l\leq {b-1}$, 
and $\bm{\rho} =(\rho _0,\rho _1,\ldots, \rho _{b-1})\in (\mathbb{F}^*_q)^{b}$ is a vector satisfy Lemma \ref{le3.7}.
Then $\langle \textbf{a}^{qi+j}, \textbf{v}^{q+1} \rangle_E \neq 0$ if and only if $qi+j=\upsilon t$ with $0\leq \upsilon \leq b-1$, 
where  $0\leq i,j\leq \frac{b(q-1)}{a}-1$.

\end{lemma}

\begin{proof}
From the definitions of $\textbf{a}$ and $\textbf{v}$, we can get 
$$\langle \textbf{a}^{qi+j}, \textbf{v}^{q+1} \rangle_E=\sum_{l = 0}^{b-1}\xi ^{(qi+j)l}v_l^{q+1}\sum_{s=0}^{t-1}\beta ^{s(qi+j)}.$$
Notably,         

$$\sum_{s=0}^{t-1}\beta ^{s(qi+j)}=\left\{ \begin{array}{l}
	0, t \nmid (qi+j) ,\\
	t, t \mid (qi+j).
\end{array} \right.$$
From Lemma \ref{le3.8}, $\sum_{s=0}^{t-1}\beta ^{s(qi+j)}=t$ if and only if $qi+j=\upsilon t$ with $0\leq \upsilon  \leq b-1$, then by Lemma \ref{le3.7}, one can get 
$\langle \textbf{a}^{qi+j}, \textbf{v}^{q+1} \rangle_E=t\sum_{l = 0}^{b-1}\xi ^{\upsilon tl}v_l^{q+1}=t\sum_{l = 0}^{b-1}\xi ^{\upsilon tl}\rho_l \neq 0.$ The result holds.

\end{proof}

\begin{theorem}\label{the3.3}
Let $n=\frac{b(q^2-1)}{a}$, where q is a prime power, $b\leq a$, $a\vert (q-1)$ and $(a,b)\neq (q-1,q-1)$, 
then one can get an EAQMDS code with parameters $[[n, n-2d+c+2, d; c]]_q$, where $2\leq d\leq \frac{b(q-1)}{a}+1$ and $c=b$.

\end{theorem}

\begin{proof}
Suppose that there exists a GRS code, denoted as $GRS_k(\textbf{a},\textbf{v})$,  associated with vectors \textbf{a} and \textbf{v}, where \textbf{a} and \textbf{v} are given in Lemma \ref{le3.9}. 
One can get $GRS_k(\textbf{a},\textbf{v})$ has a generator matrix as below.
$$ G_k  =  \begin{pmatrix}
	v_0    & v_0             & \cdots & v_0                     & \cdots & v_{b-1}                      & \cdots & v_{b-1}      \\
	v_0    & v_0\beta       & \cdots & v_0\beta ^{t-1}         & \cdots & v_{b-1}\xi^{b-1}            & \cdots & v_{b-1}(\xi^{b-1}\beta ^{t-1})         \\
	v_0    & v_0\beta ^2     & \cdots & v_0(\beta ^{t-1})^2     & \cdots & v_{b-1}(\xi^{b-1})^2        & \cdots & v_{b-1}(\xi^{b-1}\beta ^{t-1})^2             \\
	\ldots & \ldots          & \ldots & \ldots                  & \ldots & \ldots                   & \ldots & \ldots        \\
	v_0    & v_0\beta ^{k-1} & \cdots & v_0(\beta ^{t-1})^{k-1} & \cdots & v_{b-1}(\xi^{b-1})^{k-1}    & \cdots & v_{b-1}(\xi^{b-1}\beta ^{t-1})^{k-1}

\end{pmatrix}.$$
According to Theorem \ref{the2.2}, there must exist a $GRS_{n-k}(\textbf{a},\textbf{v}^{'})$ with parameters $[n,n-k,k+1]$, whose parity-check matrix is $G_k$. By calculation,
one can get 

$$ G_kG^\dagger _k  =  \begin{pmatrix}
	\sigma_{0,0}    & \sigma_{1,0}       & \cdots & \sigma_{k-1,0}                 \\
	\sigma_{0,1}    & \sigma_{1,1}       & \cdots & \sigma_{k-1,1}                \\
	\sigma_{0,2}    & \sigma_{1,2}       & \cdots & \sigma_{k-1,2}                  \\
	\ldots          & \ldots             & \ldots &  \ldots                       \\
	\sigma_{0,k-1}  & \sigma_{1,k-1}     & \cdots & \sigma_{k-1,k-1}

\end{pmatrix}.$$
where $\sigma_{i,j}=\langle \textbf{a}^{qi+j}, \textbf{v}^{q+1} \rangle_E$.

From Lemma \ref{le3.9}, $\sigma_{i,j}\neq 0$ if and only if $qi+j=\upsilon t$ with $0\leq \upsilon \leq b-1$.

If there exist $i_1=i_2=i$ such that $qi+j_1=\upsilon_1 t$ and $qi+j_2=\upsilon_2 t$, where $j_1\neq j_2$, then $j_1-j_2=(\upsilon_1-\upsilon_2)t$. 
In fact, $\vert j_1-j_2\vert < q-1$. However, $\vert (\upsilon_1-\upsilon_2)t\vert =\vert (\upsilon_1-\upsilon_2)\frac{q-1}{a}(q+1) \vert \geq q+1$.
So $\sigma_{i,j}\neq 0$ cannot appear in the same row of the matrix.

If there exist $j_1=j_2=j$ such that $qi_1+j=\upsilon_1 t$ and $qi_2+j=\upsilon_2 t$, where $i_1\neq i_2$, then $q(i_1-i_2)=(\upsilon_1-\upsilon_2)t=(\upsilon_1-\upsilon_2)\frac{q^2-1}{a}$.
Therefore, $q \vert (\upsilon_1-\upsilon_2) $, which contradicts to the fact that $\vert \upsilon_1-\upsilon_2\vert \leq b-1 <q $. So $\sigma_{i,j}\neq 0$ cannot appear in the same column of the matrix.

Hence, for $0\leq i,j\leq \frac{b(q-1)}{a}-1$,  $\sigma_{i,j}\neq 0$ cannot occur in the same row and column of the matrix.
Consequently, $rank(G_kG_k^\dagger )=b$. According to Theorem \ref{the2.3},
the EAQMDS codes are derived.

\end{proof}

\begin{remark}
	EAQMDS codes with the following parameters had been constructed in \cite{LZLK19}:
		\begin{itemize}
		
		\item $[[b\frac{q^2-1}{2a},b\frac{q^2-1}{2a}-2d+c+2,d;c]]_q$, where $2a|(q+1)$, $2\leq b\leq 2a$, $1\leq c\leq 2a-1$, and $cm+2\leq d\leq (a+\lceil\frac{c}{2}\rceil)m$.
		\item $[[b\frac{q^2-1}{2a+1},b\frac{q^2-1}{2a+1}-2d+c+2,d;c]]_q$, where $(2a+1)|(q+1)$, $2\leq b\leq 2a$, $1\leq c\leq 2a$, and $cm+2\leq d\leq (a+1+\lceil\frac{c}{2}\rceil)m$.
		
	\end{itemize}
	It is easy to see that their code lengths are different from ours due to the fact that our $a$ is a divisor of $q-1$.
\end{remark}

\begin{example}
	We show some of the new EAQMDS codes of length $n=\frac{b({q^2}-1)}{a}$ with $a\vert (q-1)$ derived from Theorem \ref{the3.3} whose lengths are not divisors of $q^2-1$ in Table \ref{tab4}.

	\begin{table}
		\caption{New EAQMDS codes of length $n = \frac{b({q^2} - 1)}{a}$ with $a\vert(q-1)$ }
		\begin{center}
			\begin{tabular}{ccccc}
				\toprule
				$ q $    &$a$& $b$ & $[[n,k,d;c]]_q$& $d$\\
				\midrule
				$4$ &$3$  & $2$   &$[[10,14-2d,d;2]]_{4}$  & $2\le d\le 3$  \\
				    
				$5$ &$4$  & $3$   &$[[30,37-2d,d;3]]_{5}$  & $2\le d\le 6$  \\
				
				$7$ &$6$  & $4$   &$[[32,38-2d,d;4]]_{7}$  & $2\le d\le 5$  \\
				    &$6$  & $5$   &$[[40,47-2d,d;5]]_{7}$  & $2\le d\le 6$  \\

				$8$ &$7$ & $2$   &$[[18,22-2d,d;2]]_{8}$  & $2\le d\le 3$  \\
				    &$7$ & $3$   &$[[27,32-2d,d;3]]_{8}$ & $2\le d\le 4$ \\
				    &$7$ & $4$   &$[[36,42-2d,d;4]]_{8}$  & $2\le d\le 5$  \\	
				    &$7$ & $5$   &$[[45,52-2d,d;5]]_{8}$  & $2\le d\le 6$  \\	
				    &$7$ & $6$   &$[[54,62-2d,d;6]]_{8}$  & $2\le d\le 7$  \\	

				$9$ 
			     	&$4$ & $3$   &$[[60,65-2d,d;3]]_{9}$  & $2\le d\le 7$  \\				
				    &$8$ & $3$   &$[[30,35-2d,d;3]]_{9}$  & $2\le d\le 4$  \\
				    &$8$ & $5$   &$[[50,57-2d,d;5]]_{9}$  & $2\le d\le 6$ \\
				    &$8$ & $6$   &$[[60,68-2d,d;6]]_{9}$  & $2\le d\le 7$  \\
				    &$8$ & $7$   &$[[70,79-2d,d;7]]_{9}$  & $2\le d\le 8$  \\

				$11$
				    &$5$  & $2$   &$[[48,52-2d,d;2]]_{11}$ & $2\le d\le 5$ \\
				    &$5$  & $3$   &$[[72,77-2d,d;3]]_{11}$ & $2\le d\le 7$ \\
					&$5$  & $4$   &$[[96,102-2d,d;4]]_{11}$ & $2\le d\le 9$ \\

				    &$10$  & $3$   &$[[36,41-2d,d;3]]_{11}$ & $2\le d\le 4$ \\
				    &$10$  & $4$   &$[[48,54-2d,d;4]]_{11}$ & $2\le d\le 5$ \\		   
					&$10$  & $6$   &$[[72,80-2d,d;6]]_{11}$ & $2\le d\le 7$ \\	
					&$10$  & $7$   &$[[84,93-2d,d;7]]_{11}$ & $2\le d\le 8$ \\	
					&$10$  & $8$   &$[[96,106-2d,d;8]]_{11}$ & $2\le d\le 9$ \\	
					&$10$  & $9$   &$[[108,119-2d,d;9]]_{11}$ & $2\le d\le 10$ \\	

				$13$&$6$ & $4$   &$[[112,118-2d,d;4]]_{13}$ & $2\le d\le 9$ \\
				    &$6$ & $5$   &$[[140,147-2d,d;5]]_{13}$ & $2\le d\le 11$ \\
				
					&$12$ & $5$   &$[[70,77-2d,d;5]]_{13}$ & $2\le d\le 6$ \\
					&$12$ & $7$   &$[[98,107-2d,d;7]]_{13}$ & $2\le d\le 8$ \\
					&$12$ & $8$   &$[[112,122-2d,d;8]]_{13}$ & $2\le d\le 9$ \\
					&$12$ & $9$   &$[[126,137-2d,d;9]]_{13}$ & $2\le d\le 10$ \\
					&$12$ & $10$   &$[[140,150-2d,d;10]]_{13}$ & $2\le d\le 11$ \\
					&$12$ & $11$   &$[[154,167-2d,d;11]]_{13}$ & $2\le d\le 12$ \\
					
				$16$  &$3$ & $2$  &$[[170,174-2d,d;2]]_{16}$ & $2\le d\le 11$ \\
				      &$5$ & $2$  &$[[102,106-2d,d;2]]_{16}$ & $2\le d\le 7$ \\
				      &$5$ & $3$  &$[[153,158-2d,d;3]]_{16}$ & $2\le d\le 10$ \\
				      &$5$ & $4$  &$[[204,210-2d,d;4]]_{16}$ & $2\le d\le 13$ \\
				      &$15$ & $2$  &$[[34,38-2d,d;2]]_{16}$ & $2\le d\le 3$ \\
				      &$15$ & $3$  &$[[51,56-2d,d;3]]_{16}$ & $2\le d\le 4$ \\
				      &$15$ & $4$  &$[[68,74-2d,d;4]]_{16}$ & $2\le d\le 5$ \\
				      &$15$ & $5$  &$[[85,92-2d,d;5]]_{16}$ & $2\le d\le 6$ \\
				      &$15$ & $6$  &$[[102,110-2d,d;6]]_{16}$ & $2\le d\le 7$ \\
				      &$15$ & $7$  &$[[119,128-2d,d;7]]_{16}$ & $2\le d\le 8$ \\
					  &$15$ & $8$  &$[[136,146-2d,d;8]]_{16}$ & $2\le d\le 9$ \\
					  &$15$ & $9$  &$[[153,164-2d,d;9]]_{16}$ & $2\le d\le 10$ \\			  
					  &$15$ & $10$  &$[[170,182-2d,d;10]]_{16}$ & $2\le d\le 11$ \\
					  &$15$ & $11$  &$[[187,200-2d,d;11]]_{16}$ & $2\le d\le 12$ \\
					  &$15$ & $12$  &$[[204,228-2d,d;12]]_{16}$ & $2\le d\le 13$ \\
					  &$15$ & $13$  &$[[221,236-2d,d;13]]_{16}$ & $2\le d\le 14$ \\
					  &$15$ & $14$  &$[[238,254-2d,d;14]]_{16}$ & $2\le d\le 15$ \\

					  \bottomrule
			\end{tabular}
		\end{center}
		\label{tab4}
	\end{table}

\end{example}

\section{Conclusion}\label{sec4}
	Let $n=\frac{b({q^2}-1)}{a}+\frac{{q^2}-1}{a}$ and $n=\frac{b({q^2}-1)}{a}$. Three classes of EAQMDS codes of length $n$ were derived from GRS codes in this paper.
	Taking different values of $a$ and $b$, some lengths of ours results are divisors of $q^2-1$. Compared with the known results, they have much larger minimum distances.
	Furthermore, as the lengths of our EAQMDS codes in this paper can be viewed as the sum of two divisors of $q^2-1$, so they are probably not  divisors of $q^2-1$.
	Some known EAQMDS codes of lengths not be the divisors of $q^2- 1$ are listed in Table \ref{tab5}. Compared with them, our lengths are new and not covered by them.
	
	\begin{sidewaystable}\centering
	\newcommand{\tabincell}[2]{\begin{tabular}{@{}#1@{}}#2\end{tabular}}
      \centering
		% table caption is above the table
		\caption{Some known EAQMDS codes of lengths not divide $q^2-1$}
		\renewcommand{\arraystretch}{1.5}
		% For LaTeX tables use
		\begin{tabular}{lll}
			\hline\noalign{\smallskip}
			Parameters & Constraints &  References  \\
			\noalign{\smallskip}\hline\noalign{\smallskip}
			$[[n=1+r\frac{q^2-1}{h}, n-2k+c,k+1;c]]_q$ & $h \vert (q-1)$, $1\leq r\leq h, (c-1)\frac{q-1}{h}+1 \leq k\leq c \frac{q-1}{h}$, $c=l\frac{q-1}{h}. $ & \cite{GL20}  \\
			$[[n=1+r\frac{q^2-1}{h}, n-2k+c,k+1;c]]_q$ & \tabincell{c}{$h \vert (q-1)$, $1\leq r\leq h$, $1+l\frac{q-1}{h} \leq c\leq (l+1)\frac{q-1}{h}.$\\
			$\frac{1}{2}(\frac{q-1}{h}+1)c+\lceil \frac{c-l-2}{2}\rceil \frac{q-1}{h}+1 \leq k\leq \frac{1}{2}(\frac{q-1}{h}+1)c+\lceil \frac{c-l}{2}\rceil \frac{q-1}{h}.$} & \cite{GL20}  \\
			$[[n=1+r\frac{q^2-1}{2h}, n-2k+c,k+1;c]]_q$ & $\frac{q-1}{h}$ even, $1\leq r\leq 2h$, $(c-1)\frac{q-1}{2h}+1\leq  k\leq c\frac{q-1}{2h}$, $c=l\frac{q-1}{h}.$ & \cite{GL20} \\
			$[[n=1+r\frac{q^2-1}{2h}, n-2k+c,k+1;c]]_q$ & $\frac{q-1}{h}$ odd, $1\leq r\leq 2h$, $(c-1)\frac{q-1}{h}+1\leq  k\leq c\frac{q-1}{h}$, $c=l\frac{q-1}{h}.$ & \cite{GL20} \\
			$[[n=1+r\frac{q^2-1}{2h}, n-2k+c,k+1;c]]_q$ & $2h \vert (q-1)$, $1\leq r\leq 2h$, $(c-1)\frac{q-1}{2h}+1\leq  k\leq c\frac{q-1}{2h}$, $c=l\frac{q-1}{2h}.$ & \cite{GL20} \\
			$[[n=1+r\frac{q^2-1}{2h}, n-2k+c,k+1;c]]_q$ &\tabincell{c}{ $2h \vert (q-1)$, $1\leq r\leq 2h$, $1+l\frac{q-1}{2h}\leq c<(l+1)\frac{q-1}{2h}$,\\
			$\frac{1}{2}(\frac{q-1}{2h}+1)c+\lceil \frac{c-l-2}{2}\rceil\frac{q-1}{2h}+1 \leq  k\leq \frac{1}{2}(\frac{q-1}{2h}+1)c+\lceil \frac{c-l}{2} \rceil\frac{q-1}{2h}$.} & \cite{GL20} \\
			$[[n=lh+mr, n-2d+c, d+1;c]]_q$ &\tabincell{c}{ $s\vert (q+1)$, $t\vert (q-1)$ with $s,t$ even, $l=\frac{q^2-1}{s}$, $m=\frac{q^2-1}{t}$, $1\leq h\leq \frac{s}{2}$, $2\leq r\leq \frac{t}{2}$, \\$1\leq d \leq min\{\frac{s+h}{2}\cdot\frac{q+1}{s}-2,\frac{q+1}{2}+\frac{q-1}{t}-1\}$, $c=h-1$.} & \cite{JCL21} \\  
			$[[n=1+(2e+1)\frac{q^2-1}{2s+1}, n-2k+c, k+1;c]]_q$ & $(2s+1)\vert (q+1), 0\leq e \leq s-1,1\leq k\leq (s+1+e)\frac{q+1}{2s+1}-1, c=2e+1.$ & \cite{JCL21} \\
			$[[n=1+(2e+2)\frac{q^2-1}{2s},   n-2k+c, k+1;c]]_q$ & $2s\vert (q+1), 0\leq e\leq s-2, 1\leq k\leq (s+1+e)\frac{q+1}{2s}-1$, $c=2e+2$.  & \cite{JCL21}\\
			$[[n=1+(2e+1)\frac{q^2-1}{2s},   n-2k+c, k+1;c]]_q$ & $2s\vert (q+1), 0\leq e \leq s-1,1\leq k\leq (s+e)\frac{q+1}{2s}-2 $, $c=2e+1$.       &   \cite{JCL21}\\
			$[[n=\frac{q^2-1}{2}+\frac{q^2-1}{b}, n-2d+c+2,d;c]]_q $              & $ q>3$ is odd, $b\vert(q+1)$, $b\equiv2({\rm mod}\ 4)$, $ 2\leq d\leq \frac{3(q+1)}{4}+\frac{q+1}{2b}$, $c=\frac{b}{2}+1$.& \cite{WJ22}  \\
			$[[n=\frac{q^2-1}{2}+\frac{2(q^2-1)}{b}, n-2d+c+2,d;c]]_q $           & $ q>3$ is odd, $b\vert(q+1)$, $b\equiv0({\rm mod}\ 4)$, $ 2\leq d\leq \frac{3(q+1)}{4}+\frac{q+1}{b}-1$, $c=\frac{b}{2}+1$.& \cite{WJ22}  \\
			$[[n=\frac{b({q^2} - 1)}{a}+\frac{{q^2} - 1}{a},n-2d+c+2,d;c]]_q$     & $a\vert (q+1)$, $ b\leq min\{a-3,q-3\}$, $a+b \equiv 1\ ({\rm mod}\ 2)$, $2\leq d\leq \frac{a+b+1}{2}\cdot \frac{q+1}{a}$, $c=b+1$.& Theorem \ref{the3.1}  \\  
			$[[n=\frac{b({q^2} - 1)}{a}+\frac{{q^2} - 1}{a},n-2d+c+2,d;c]]_q$     & $a\vert (q+1)$, $ b\leq min\{a-4,q-3\}$, $a+b \equiv 0\ ({\rm mod}\ 2)$, $2\leq d\leq\frac{a+b+2}{2}\cdot \frac{q+1}{a}-1$,  $c=b+1$.& Theorem \ref{the3.2}  \\
			$[[n=\frac{b({q^2} - 1)}{a},n-2d+c+2,d;c]]_q$                         & $a\vert (q-1)$, $ b\leq a$, $(a,b)\neq (q-1,q-1)$, $2\leq d\leq \frac{b(q-1)}{a}+1$,  $c=b$.& Theorem \ref{the3.3}  \\

			\noalign{\smallskip}\hline
		\end{tabular}
		\label{tab5}
	\end{sidewaystable}

\section*{Data availability}
Data sharing is not applicable to this article as no datasets were generated or analyzed during the current study. 

\section*{Acknowledgement}
	The work was supported by the National Natural Science Foundation of China (12271137, U21A20428, 12171134).

\end{document}